\let\@authorsaddresses\@empty
\newcommand{\omittext}[1]{\ignorespaces}
\def\dontdofcolorbox{\renewcommand\fcolorbox[4][]{##4}}
\setlist{noitemsep,leftmargin=10pt,topsep=2pt,parsep=2pt,partopsep=2pt}
\newcommand{\binopdef}     \oplus %
\newcommand{\unopdef}      \ominus %
\DeclareRobustCommand*\cal{\@fontswitch\relax\mathcal}
\def\cC{{\cal C}}
\def\cE{{\cal E}}
\def\cI{{\cal I}}
\def\cO{{\cal O}}
\def\cS{{\cal S}}
\newcommand{\tup} [1] {\langle #1 \rangle}
    \newcommand{\infral} [3] {\infer[\textsc{#3}]{\begin{array}{c} #2 \end{array} }{ \begin{array}{c} #1  \end{array} } }
\renewcommand\section{\@startsection{section}{1}{\z@}%
                       {-8\p@ \@plus -4\p@ \@minus -4\p@}%
                       {6\p@ \@plus 4\p@ \@minus 4\p@}%
                       {\normalfont\large\bfseries\boldmath
                        \rightskip=\z@ \@plus 8em\pretolerance=10000 }}
\renewcommand\subsection{\@startsection{subsection}{2}{\z@}%
                       {-8\p@ \@plus -4\p@ \@minus -4\p@}%
                       {6\p@ \@plus 4\p@ \@minus 4\p@}%
                       {\normalfont\normalsize\bfseries\boldmath
                        \rightskip=\z@ \@plus 8em\pretolerance=10000 }}
\renewcommand\subsubsection{\@startsection{subsubsection}{3}{\z@}%
                       {-4\p@ \@plus -4\p@ \@minus -4\p@}%
                       {-1.5em \@plus -0.22em \@minus -0.1em}%
                       {\normalfont\normalsize\bfseries\boldmath}}
\newcommand{\eg}{{\em e.g.}, }
\newcommand{\ie}{{\em i.e.}, }
\newcommand{\etc}{{\em etc.}\xspace}
\newcommand{\kstar}{^{\textstyle *}}
\newcommand{\concat} {\mathsf{cat}}
\newcommand{\splitc} {\mathsf{split}}
\newcommand{\relay} {\mathsf{relay}}
\newcommand{\exec} [1] {\llbracket #1 \rrbracket}
\newcommand{\enablecomments}{true}
\newcommand{\TODO}[1]{\hl{\textbf{TODO:} #1}\xspace}
\newif\ifshowcomment
\definecolor{revclr}{HTML}{004775}
\newcommand{\nv}[1]{\ifshowcomment [{\color{cyan}Nikos:  #1}] \else \fi}
\newcommand{\km}[1]{\ifshowcomment [{\color{blue}Shivam: #1}] \else \fi}
\newcommand{\kk}[1]{\ifshowcomment [{\color{magenta}Kon: #1}] \else \fi}
\newcommand{\rev}[1]{#1}
\newcommand{\TODO}[1]{}
\newcommand{\nv}[1]{}
\newcommand{\km}[1]{}
\newcommand{\kk}[1]{}
\newcommand{\rev}[1]{#1}
\newcommand{\eat}[1]{}
\newcommand{\tr}[1]{}
\newcommand{\sx}[1]{(\S\ref{#1})}
\newcommand{\unix}{{\scshape Unix}\xspace}
\newcommand{\heading}[1]{\vspace{4pt}\noindent\textbf{#1}:\enspace}
\newcommand{\ttt}[1]{\mintinline[fontsize=\small]{bash}{#1}}
\newcommand{\ttiny}[1]{\mintinline[fontsize=\footnotesize]{bash}{#1}}
\lstdefinelanguage{sh}{
  morekeywords={for, in, do, done, \|},
  keywordstyle=\color{purple}\ttfamily,
  ndkeywordstyle=\color{black}\ttfamily\bfseries,
  identifierstyle=\color{black},
  sensitive=false,
  comment=[l]{\#},
  commentstyle=\color{lightgray},
  stringstyle=\color{darkgray}\ttfamily,
  morestring=[b]',
  morestring=[b]",
  abovecaptionskip=0pt,
  aboveskip=0pt,
  belowcaptionskip=0pt,
  belowskip=0pt,
  frame=none                     %
}
\tiny\color{gray},   %
\begin{document}

\title{An Order-Aware Dataflow Model for Parallel Unix Pipelines}

\author{Shivam Handa}
\authornote{Equal contribution.}
\affiliation{
  \institution{CSAIL, MIT}       %
  \country{USA}                       %
}
\email{shivam@csail.mit.edu}          %

\author{Konstantinos Kallas}
\authornotemark[1]
\affiliation{
  \institution{University of Pennsylvania}       %
  \country{USA}                                  %
}
\email{kallas@seas.upenn.edu}                    %

\author{Nikos Vasilakis}
\authornotemark[1]
\affiliation{
  \institution{CSAIL, MIT}            %
  \country{USA}                       %
}
\email{nikos@vasilak.is}              %

\author{Martin C. Rinard}
\affiliation{
  \institution{CSAIL, MIT}            %
  \country{USA}                       %
}
\email{rinard@csail.mit.edu}          %

\begin{abstract}
    We present a dataflow model for modelling parallel \unix shell pipelines.
To accurately capture the semantics of complex \unix pipelines, the dataflow model is order-aware, \ie the order in which a node in the dataflow graph consumes inputs from different edges plays a central role in the semantics of the computation and therefore in the resulting parallelization.
We use this model to capture the semantics of transformations that exploit data parallelism available in \unix shell computations and prove their correctness.
We additionally formalize the translations from the \unix shell to the dataflow model and from the dataflow model back to a parallel shell script.
We implement our model and transformations as the compiler and optimization passes of a system parallelizing shell pipelines, and use it to evaluate the speedup achieved on 47 pipelines.

\end{abstract}

\begin{CCSXML}
  <ccs2012>
    <concept>
      <concept_id>10011007.10011006.10011041</concept_id>
      <concept_desc>Software and its engineering~Compilers</concept_desc>
      <concept_significance>500</concept_significance>
    </concept>
    <concept>
      <concept_id>10011007.10010940.10010971.10010980.10010986</concept_id>
      <concept_desc>Software and its engineering~Massively parallel systems</concept_desc>
      <concept_significance>500</concept_significance>
      </concept>
    <concept>
      <concept_id>10011007.10011006.10011050.10010517</concept_id>
      <concept_desc>Software and its engineering~Scripting languages</concept_desc>
      <concept_significance>300</concept_significance>
      </concept>
  </ccs2012>
\end{CCSXML}
  
\ccsdesc[500]{Software and its engineering~Compilers}
\ccsdesc[500]{Software and its engineering~Massively parallel systems}
\ccsdesc[300]{Software and its engineering~Scripting languages}

\keywords{
  \unix, POSIX, Shell, Parallelism, Dataflow, Order-awareness
}

\maketitle

\section{Introduction}
\label{intro}

\unix pipelines are an attractive choice for specifying succinct and simple programs for data processing, system orchestration, and other automation tasks~\cite{mcilroy1978unix}.
Consider, for example, the following program based on the original \ttt{spell} written by Johnson~\cite{bentley1985spelling}, lightly modified for modern environments:\footnote{
  Johnson's program additionally used \ttiny{troff}, \ttiny{prepare}, and \ttiny{col -bx} to clean up now-legacy formatting metadata that does not exist in markdown.
  Moreover, \ttiny{comm -13} was replaced with \ttiny{grep -xvf} to highlight crucial features of the model.
}

\medskip
\begin{minted}[fontsize=\footnotesize]{bash}
cat f1.md f2.md | tr A-Z a-z | tr -cs A-Za-z '\n' | sort | uniq |                   # (Spell)
  grep -vx -f dict.txt - > out ; cat out | wc -l | sed 's/$/ mispelled words!/'
\end{minted}
\medskip

\noindent
The first command streams two markdown files into a pipeline that converts characters in the stream into lower case, removes punctuation, sorts the stream in alphabetical order, removes duplicate words, and filters out words from a dictionary file (lines 1 and 2, up to ``\ttt{;}'').
A second pipeline (line 2, after ``\ttt{;}'') counts the resulting lines to report the number of misspelled words to the user.

As this example illustrates, the \unix shell offers a programming model that facilitates the composition of commands using unidirectional communication channels that feed the output of one command as an input to another. 
These channels are either ephemeral, unnamed pipes expressed using the \ttt{|} character and lasting for the duration of the producer and consumer, or persistent, named pipes (\unix FIFOs) created with \ttt{mkfifo} and lasting until explicitly deleted.
Each command executes sequentially, with pipelined parallelism available between commands executing in the same pipeline.
Unfortunately, this model leaves substantial data parallelism, \ie parallelism achieved by splitting inputs into pieces and feeding the pieces to parallel instances of the script, unexploited.

This fact is known in the \unix community and has motivated the development of a variety of tools that attempt to exploit latent data parallelism in \unix scripts~\cite{Tange2011a, pash, posh}.
On the one hand, tools such as GNU Parallel~\cite{Tange2011a} can be used by experienced users to achieve parallelism, but could also easily lead to incorrect results.
On the other hand, two recent systems, PaSh~\cite{pash} and POSH~\cite{posh}, focus respectively on extracting data parallelism latent in \unix pipelines to improve the execution time of
  (i) CPU-intensive shell scripts (PaSh), and (ii) networked IO-intensive shell scripts (POSH).
These systems achieve order-of-magnitude performance improvements on sequential pipelines, but their semantics and associated transformations are not clearly defined, making it difficult to ensure that the optimized parallel scripts are sound with respect to the sequential ones.

To support the ability to reason about and correctly transform \unix shell pipelines, we present a new {\em dataflow model}.
In contrast to standard dataflow models~\cite{kahn1974semantics, kahn1976coroutines, lee1987synchronous, lee1987static, karp1966properties}, our dataflow model is {\em order-aware}---\ie the order in which a node in the dataflow graph consumes inputs from different edges plays a central role in the semantics of the computation and therefore in the resulting parallelization. 
This model is different from models that allow multiplexing different chunks of data in a single channel, such as sharding or tagging, or ones that are oblivious to ordering, such as shuffling---and is a direct byproduct of the ordered semantics of the shell and the opacity of \unix commands.
In the \emph{Spell} script shown earlier, for example, while all commands consume elements from an input stream in order---a property of \unix streams \eg pipes and FIFOs---they differ in how they consume across streams:
  \ttt{cat} reads input streams in the order of its arguments,
  \ttt{sort -m} reads input streams in interleaved fashion, and
  \ttt{grep -vx -f} first reads \ttt{dict.txt} before reading from its standard input.
  
We use this order-aware dataflow model (ODFM) to express the semantics of transformations that exploit data parallelism available in \unix shell computations.
These transformations capture the parallelizing optimizations performed by both PaSh~\cite{pash} and POSH~\cite{posh}.
We also use our model to prove that these transformations are correct, \ie that they do not affect the program behavior with respect to sequential output.
Finally, we formalize bidirectional translations between the shell and ODFM, namely from the shell language to ODFM and \emph{vice versa}, closing the loop for a complete shell-to-shell parallelizing compiler.

To illustrate the applicability of our model, we extend PaSh by reimplementing its compilation and optimization phases with ODFM as the centerpiece.
The new implementation translates fragments of a shell script to our dataflow model, applies a series of parallelizing transformations, and then translates the resulting parallel dataflow graph back to a parallel shell script that is executed instead of the original fragment.
Our new implementation improves modularity and facilitates the development of different transformations independently on top of ODFM.
We use the new implementation to evaluate the benefit of a specific transformation by parallelizing 47 unmodified shell scripts with and without this transformation and measuring their execution times.
Finally, we present a case study in which we parallelize two scripts using GNU Parallel~\cite{Tange2011a};
our experience indicates that while it is easy to parallelize shell scripts using such a tool, it is also easy to introduce bugs, leading to incorrect results.

In summary, this paper makes the following contributions:
\begin{itemize}
  \item {\bf Order-Aware Dataflow Model:} It introduces the \emph{order-aware dataflow model} (ODFM), a dataflow model tailored to the \unix shell that captures information about the order in which nodes consume inputs from different input edges~\sx{model}.
  \item {\bf Translations:} It formalizes the bidirectional translations between shell and ODFM required to translate shell scripts into dataflow graphs and dataflow graphs back to parallel scripts~\sx{translations}. %
  \item {\bf Transformations and Proofs of Correctness:} It presents a series of ODFM transformations for extracting data parallelism. It also presents proofs of correctness for these transformations~\sx{transformations}.
  \item {\bf Results:} It reimplements the optimizing compiler of PaSh and presents experimental results that evaluate the speedup afforded by a specific dataflow transformation~\sx{eval}.
\end{itemize}

\noindent
The paper starts with an informal development building the necessary background~\sx{bg} and expounding on \emph{Spell}~\sx{informal}.
It then presents the three main contributions outlined above (\S\ref{model}--\ref{eval}), compares with prior work~\sx{related}, \rev{and offers a discussion~\sx{discussion}}, before closing the paper~\sx{conclusion}.

\section{Background}
\label{bg}

This section reviews background on commands and abstractions in the \unix shell.

\heading{\unix Streams}
A key \unix abstraction is the data \emph{stream}, operated upon by executing commands or \emph{processes}.
Streams are sequences of bytes, but most commands process them as higher-level sequences of line elements, with the newline character delimiting each element and the {\sc EOF} condition representing the end of a stream.
Streams are often referenced using a filename, that is an identifier in a global name-space made available by the \unix file-system such as \ttt{/home/user/x}.
Streams and files in \unix are isomorphic:
  some streams can persist as files beyond the execution of the process, whereas other streams are ephemeral in that they only exist to connect the output of one process to the input of another process during their execution.
The sequence order is maintained when changing between persistent files and ephemeral streams.

\heading{Commands}
Each command is an independent computation unit that reads one or more input streams, performs a computation, and produces one or more output streams.
Contrary to languages with a closed set of primitives, there is an unlimited number of \unix commands, each one of which may have arbitrary behaviors---with the command's side-effects potentially affecting the entire environment on which it is executing.
These commands may be written in any language or exist only in binary form, and thus \unix is not easily amenable to a single parallelizability analysis.
Parallelization tools such as GNU \ttt{parallel} leave such analysis to developers that have to ensure that the script behavior will not be affected by parallelization,
  whereas transformation-based tools such as PaSh and POSH identify key invariants that hold for entire classes of commands and then resort to annotation libraries to infer whether each invariant is satisfied by each command.
For example, an invariant that is used in both PaSh and POSH is whether a command is stateless, \ie whether it maintains state when processing its inputs, or whether it processes each input line independently.
Commands that satisfy this invariant can be parallelized by splitting their inputs in lines and then combining their outputs.

\heading{Command flags}
\unix commands are often configurable, customizing their behavior based on the task at hand.
This is usually achieved via environment variables and command flags.
Tools like PaSh and POSH address command behavior variability due to flags by including flags and command arguments in their annotation frameworks. Given command annotations, PaSh and POSH can abstract specific command invocations in a pipeline to black boxes for which some assumptions hold. This makes them applicable in the context of the shell where the space of possible command and flag combinations is exceedingly large.

\heading{Order of input consumption}
In \unix, all streams are ordered and all commands can safely assume that they can consume elements from their streams in the order they were produced.
Additionally, most commands have the ability to operate on multiple files or streams.
The order in which commands access these streams is important.
In some cases, they read streams in the order of the stream identifiers provided.
In other cases, the order is different---for example, an input stream may configure a command, and thus must be read before all the others.
Consider for example \ttt{grep -f words.txt input.txt}, which first reads \ttt{words.txt} to determine the keywords for which it needs to search, and then reads \ttt{input.txt} line by line, emitting all lines that contain one of the words in \ttt{words.txt}.
In other cases, reads from multiple streams are interleaved according to some command-specific semantics.

\heading{Composition: \unix Operators}
\unix provides several primitives for program composition, each of which imposes different scheduling constraints on the program execution.
Central among them is the \emph{pipe} (\ttt{|}), a primitive that passes the output of one process as input to the next.
The two processes form a pipeline, producing output and consuming input concurrently and possibly at different rates.
The \unix kernel facilitates program scheduling, communication, and synchronization behind the scenes.
For example, \emph{Spell}'s first \ttt{tr} transforms each character in the input stream to lower case, passing the stream to the second \ttt{tr}:
  the two \ttt{tr}s form a parallel producer-consumer pair of processes.  

Apart from pipes, the language of the \unix shell provides several other forms of program composition---\eg 
  the sequential composition operator (\ttt{;}) for executing one process after another has completed, 
  and  control structures such as \ttt{if} and \ttt{while}.
All of these constructs %
  enforce execution ordering between their components.
To preserve such ordering and thus ensure correctness, systems such as PaSh and POSH do not ``push'' parallelization beyond these constructs.
Instead, they focus on exploiting parallelism in script regions that do not face ordering constraints---which, as they demonstrate, is enough to significantly improve the performance of scripts found out in the wild~\cite{pash,posh}.

\section{Example and Overview}
\label{informal}

This section provides intuition of the order-aware dataflow model proposed in this paper by following the different phases of a shell-to-shell parallelizing compiler (inspired by PaSh and POSH), formalized in the later sections. 
Given a script such as \emph{Spell}~\sx{intro}, the compiler identifies its dataflow regions, translates them to DFGs (Shell$\rightarrow$ODFM), applies graph transformations that expose data parallelism on these DFGs, and replaces the original dataflow regions with the now-parallel regions (ODFM$\rightarrow$Shell).

\pichskip{12pt}
\parpic[r][t]{
  \begin{minipage}{50mm}
    \includegraphics[width=\linewidth]{\detokenize{./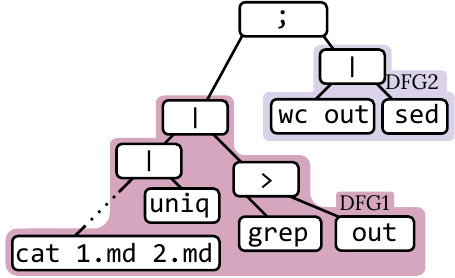}}
  \end{minipage}
}
\heading{Shell$\rightarrow$ODFM}
Provided a shell script, the compiler starts by identifying subexpressions that are potentially parallelizable.
The first step is to parse the script, creating an abstract syntax tree like the one presented on the right.
Here we %
  omit any non-stream flags and refer to all the stages between (and including) \ttt{tr} and \ttt{sort} as a dotted edge ending with \ttt{cat}.

The compiler then identifies parallelism barriers within the shell script:
  these barriers are operators that enforce synchronization constraints such as the sequential composition operator (``\ttt{;}''). 
We call any set of commands that does not include a dataflow barrier a {\em dataflow region}.
Dataflow regions are then transformed to dataflow graphs (DFGs), i.e., instances of our order-aware dataflow model.
In our example, there are two dataflow regions corresponding to the following dataflow graphs:
\begin{center}
\includegraphics[width=0.7\textwidth]{\detokenize{./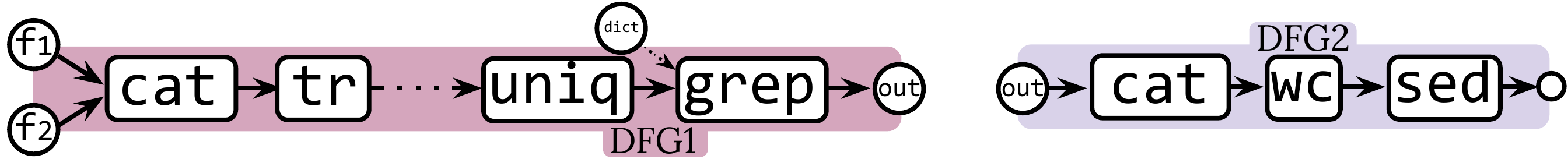}}
\end{center}

\noindent
As mentioned earlier \sx{bg}, the compiler exposes parallelism in each DFG separately to preserve the ordering requirements imposed to ensure correctness. %
For the rest of this section we focus on the parallelization of DFG1.

\parpic[r][b]{
  \begin{minipage}{45mm}
    \center
    \footnotesize
    \setlength\tabcolsep{2.5pt}
    \begin{tabular}{ll}
      \toprule
      Command                            & Aggreg. Function          \\
      \midrule
      \ttiny{cat}                          & \ttiny{cat $*}            \\
      \ttiny{tr A-Z a-z}                   & \ttiny{cat $*}            \\
      \ttiny{tr -d a}                      & \ttiny{cat $*}            \\
      \ttiny{sort}                         & \ttiny{sort -m $*}        \\
      \ttiny{uniq}                         & \ttiny{uniq $*}           \\ %
      \ttiny{grep -f a -}                  & \ttiny{cat $*}            \\
      \ttiny{wc -l}                        & \ttiny{paste -d+ $*|bc} \\
      \ttiny{sed 's/a/b/'}                 & \ttiny{cat $*}            \\
      \bottomrule
    \end{tabular}
  \end{minipage}
}
\heading{Parallelizable Commands}
Individual nodes of the dataflow graphs are shell commands.
Systems like PaSh and POSH assume key information for individual commands, \eg whether they are amenable to divide-and-conquer data parallelism.
Such data parallelism is achieved by splitting the input into pieces (at stream element boundaries), processing partial inputs in parallel, and finally applying an aggregation function to partial outputs to produce the final output.
This decomposition breaks a command into two components---a data-parallel function, which is often the command itself, and an aggregation function.
The table on the right presents aggregation functions for the shell commands in our example (all of which are parallelizable).

\pichskip{12pt}
\parpic[l][t]{
  \begin{minipage}{30mm}
    \includegraphics[width=1\linewidth]{\detokenize{./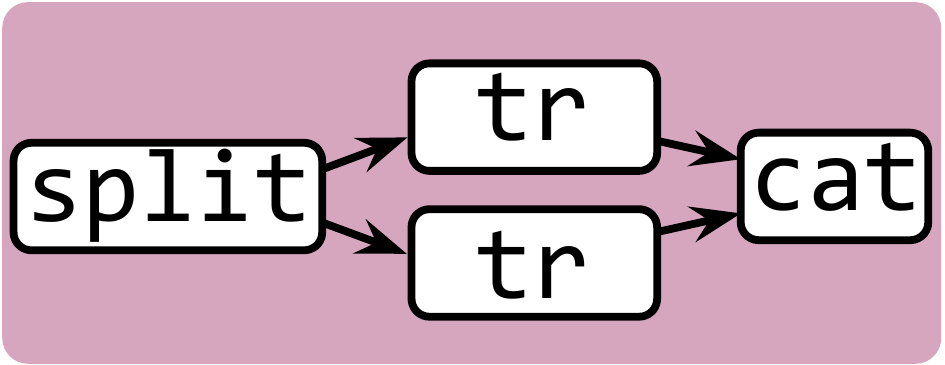}}
  \end{minipage}
}

\noindent
For example, consider the decomposition of the \ttt{tr} command.
Applying \ttt{tr} over the entire input produces the same result as \ttt{split}ting the input into two, applying \ttt{tr} to the two partial inputs, and then merging the partial results with a \ttt{cat} aggregation function.
Note that both \ttt{split} and \ttt{cat} are order-aware, \ie \ttt{split} sends the first half of its input to the first \ttt{tr} and the rest to the second, while \ttt{cat} concatenates its inputs in order.
This guarantees that the output of the DFG is the same as the one before the transformation.

\heading{Parallelization Transformations}
Given the decomposition of individual commands, the compiler's next step is to apply graph transformations to exploit parallelism present in the computation represented by the DFG. 
As each parallelizable \unix command comes with a corresponding aggregation function, the compiler's transformations first convert the DFG into one that exploits parallelism at each stage.
After applying the transformation to the two \ttt{tr} stages, the DFG looks as follows:
\begin{center}
\includegraphics[width=0.7\linewidth]{\detokenize{./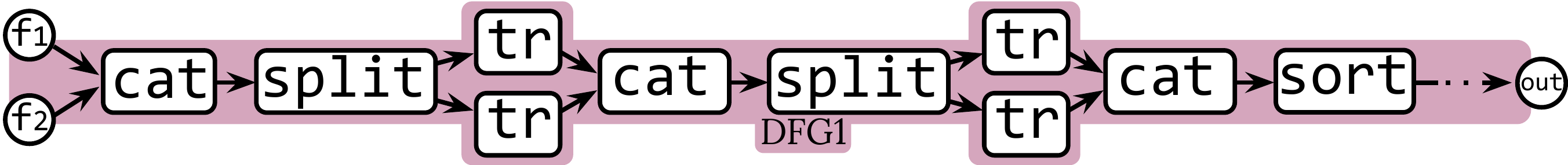}}
\end{center}
\noindent
After these transformations are applied to all DFG nodes, the next transformation pass is applied to pairs of \ttt{cat} and \ttt{split} nodes:
  whenever a \ttt{cat} is followed by a \ttt{split} of the same width, the transformation removes the pair and connects the parallel streams directly to each other.
The goal is to \emph{push} data parallelism transformations as far down the pipeline as possible to expose the maximal amount of parallelism. 
Here is the resulting DFG for the transformation applied to the two \ttt{tr} stages:
\begin{center}
\includegraphics[width=0.7\linewidth]{\detokenize{./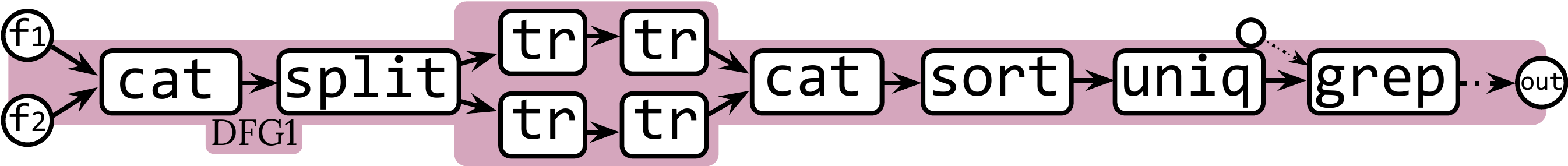}}
\end{center}
\noindent
Applying this transformation to the first three stages---\ie \ttt{cat}, \ttt{tr}, and \ttt{tr}---of DFG1 produces the following transformed DFG.
\begin{center}
\includegraphics[width=0.7\linewidth]{\detokenize{./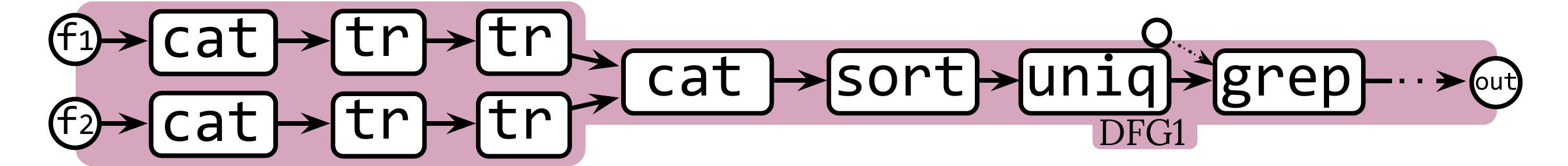}}
\end{center}
\noindent
The next node to parallelize is \ttt{sort}.
To merge the partial output of parallel \ttt{sort}s, we need to apply a sorted merge.
(In GNU systems, this is available as \ttt{sort -m} so we use this as the label of the merging node.)
The transformation then removes \ttt{cat}, replicates \ttt{sort}, and merges their outputs with \ttt{sort -m}:
\begin{center}
\includegraphics[width=0.7\linewidth]{\detokenize{./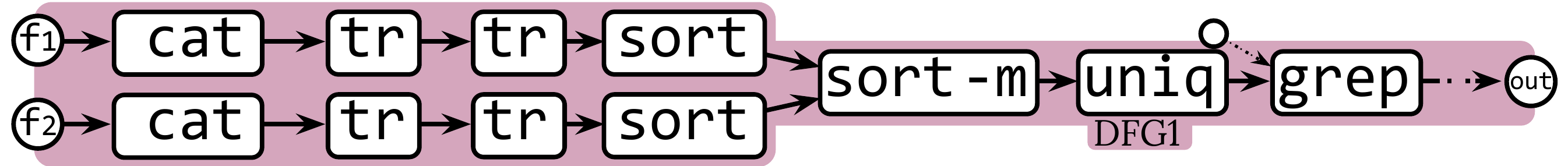}}
\end{center}
\noindent
It then continues pushing parallelism down the pipeline, after applying a \ttt{split} function to split \ttt{sort -m}'s outputs.
\begin{center}
\includegraphics[width=0.7\linewidth]{\detokenize{./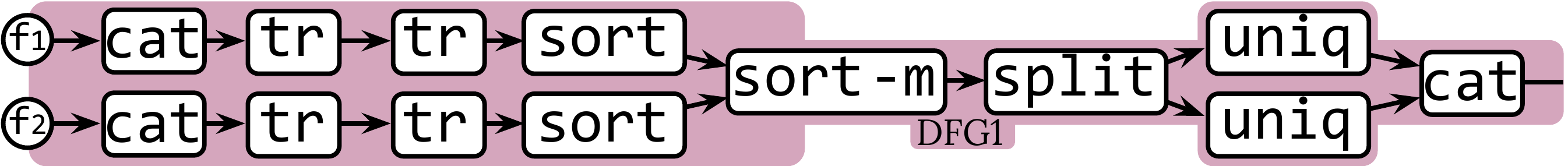}}
\end{center}

\noindent
As mentioned earlier, a similar pass of iterative transformations is applied to DFG2, but the two DFGs are not merged to preserve the synchronization constraint of the dataflow barrier ``\ttt{;}''.

\heading{Order Awareness}
Data-parallel systems~\cite{mapreduce:08,spark:12} often achieve parallelism using sharding, \ie partitioning input based on some key, or using shuffling, \ie arbitrary partitioning of inputs to parallel instances of an operator.

However, these techniques cannot be directly applied to the context of the shell, since (1) \unix commands and pipelines assume strict ordering of their input elements, (2) most commands are not independent on the basis of some key (to enable sharding), and (3) many commands are not commutative (\eg \ttt{uniq}, \ttt{cat -n}).
Since our goal is to define a model that applies directly to existing shell scripts, we cannot simply introduce new primitives that support sharding or shuffling, as is done in the case of systems that design an abstraction that fits their needs (\eg MapReduce, Spark).
Thus, data parallelism in the shell requires a careful treatment of input and output ordering.

\pichskip{12pt}
\parpic[r][t]{
  \begin{minipage}{15mm}
    \includegraphics[width=\linewidth]{\detokenize{./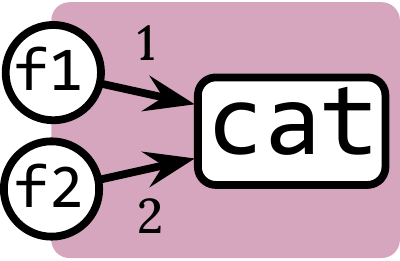}}
  \end{minipage}
}
To further explain the need for order-awareness in a model for data parallel \unix pipelines, let's look at the following examples.
Consider \emph{Spell}'s \ttt{cat f1.md f2.md} command that starts reading from \ttt{f2.md} only after it has completed reading \ttt{f1.md};
  note that any or both input streams may be pipes waiting for results from other processes.
This order can be visualized as a label over each input edge.
Correctly parallelizing this command requires ensuring that parallel \ttt{cat} (and possibly followup stages) maintains this order.

\pichskip{12pt}
\parpic[r][t]{
  \begin{minipage}{32mm}
    \includegraphics[width=\linewidth]{\detokenize{./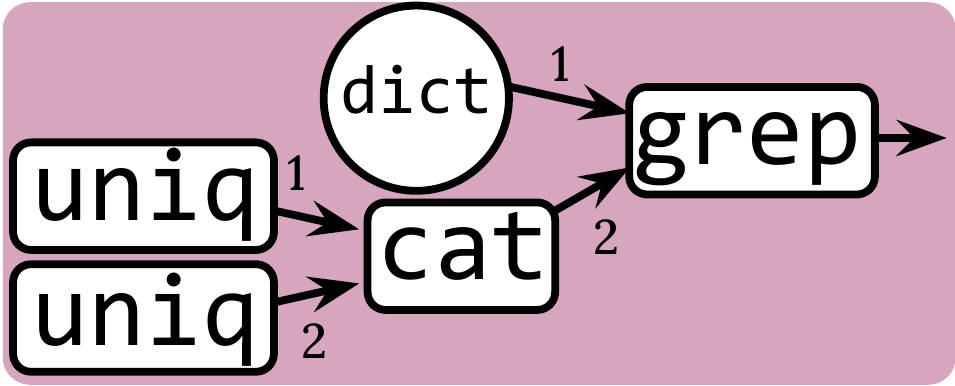}}
  \end{minipage}
}
As a more interesting example, consider \emph{Spell}'s \ttt{grep}, whose DFG is shown on the right.
Parallelizing \ttt{grep} without taking order into account is not trivial, because \ttt{grep -vx -f}'s set difference is not commutative:
  we cannot simply split its input streams into two pairs of partial inputs fed into two copies of \ttt{grep}.
Taking input ordering into account, however, highlights an important dependency between \ttt{grep}'s inputs.
The \ttt{dict} stream can be viewed as configuring \ttt{grep}, and thus \ttt{grep} can be modeled as consuming the entire \ttt{dict} stream before consuming partial inputs.

\pichskip{12pt}
\parpic[r][t]{
  \begin{minipage}{42mm}
    \includegraphics[width=\linewidth]{\detokenize{./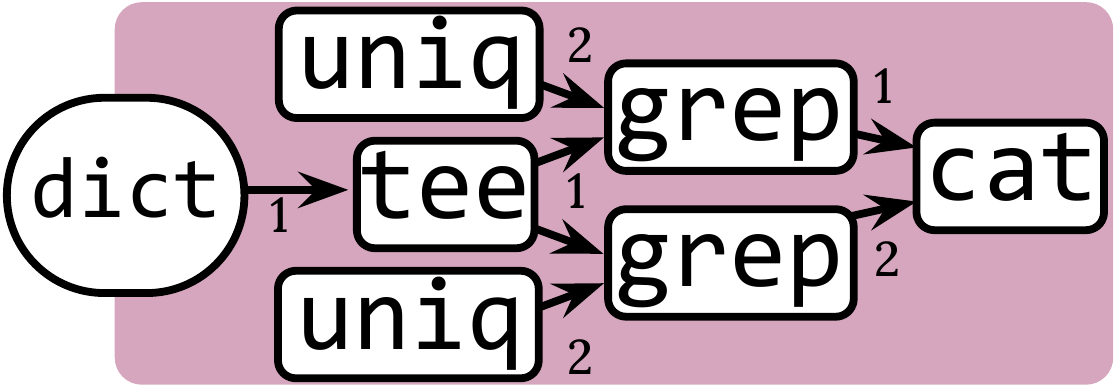}}
  \end{minipage}
}
Armed with this insight, the compiler parallelizes \ttt{grep} by passing the same \ttt{dict.txt} stream to both \ttt{grep} copies.
This requires an intermediary \ttt{tee} for duplicating the \ttt{dict.txt} stream to both copies of \ttt{grep}, each of which consumes the stream in its entirety before consuming the results of the preceeding \ttt{uniq}.

Order-awareness is also important for the DFG translation back to a shell script.
In the specific example we need to know how to instantiate the arguments of each \ttt{grep} of all possible options---\eg
  \ttt{grep -vx -f p1 p2}, \ttt{cat p1 | grep -vx -f - p2}, \etc
Aggregators are \unix commands with their own ordering characteristics that need to be accounted for.

The order of input consumption in the examples of this section is statically known and can be represented for each node as a set of configuration inputs, plus a sequence of the rest of its inputs.
To accurately capture the behavior of shell programs, however, ODFM is more expressive, allowing any order of input consumption.
The correctness of our parallelization transformations is predicated upon static but configurable orderings:
  a command reads a set of \emph{configuration} streams to setup the consumption order of its input streams which are then consumed in-order, one after the other.

\heading{ODFM$\rightarrow$Shell}
The transformed graph is finally compiled back to a script that uses POSIX shell primitives to drive parallelism explicitly.
A benefit of the dataflow model is that it can be directly implemented on top of the shell, simply translating each node to a command, and each edge to a stream.
The generated parallel script for \emph{Spell} can be seen below.

\medskip
\begin{minted}[fontsize=\footnotesize]{bash}
mkfifo t{0..14}         # DFG1: start
tr A-Z a-z < f1.md > t0 &
tr A-Z a-z < f2.md > t1 &
tr -d[:punct:] < t0 > t2 &
tr -d[:punct:] < t1 > t3 &
sort < t2 > t4 &
sort < t3 > t5 &
sort -m t4 t5 > t6 &
split t7 t8 < t6 &
# ...
tee t9 > t10 < dict.txt &
grep -vx -f t9 - < t11 > t13 &
grep -vx -f t10 - < t12 > t14 &
cat t13 t14 > out &
wait
rm t{0..14}               # DFG1: end

mkfifo t{0..8}          # DFG2: start
split t0 t1 < out &
wc -l < t0 > t2 &
wc -l < t1 > t3 &
paste -d+ t2 t3 | bc > t4 &
split t5 t6 < t4 &
sed 's/$/ mispelled words!/' < t5 > t7 &
sed 's/$/ mispelled words!/' < t6 > t8 &
cat t7 t8 &
wait
rm t{0..8}                # DFG2: end
\end{minted}
\medskip
The two DFGs are compiled into the two fragments that start with \ttt{mkfifo} and end with \ttt{rm}.
Each fragment uses a series of named pipes (FIFOs) to explicitely manipulate the input and output streams of each data-parallel instance, effectively laying out the structure of the DFG using explicit channel naming (\unix FIFOs are named in the filesystem similar to normal files.)
Aggregation functions are used to merge partial outputs from previous commands coming in through multiple FIFOs---for example, \ttt{sort -m t4 t6} and \ttt{cat t11 t12} for the first fragment, and \ttt{paste -d+ t2 t3 | bc} and \ttt{cat t7 t8} for the second.
A \ttt{wait} blocks until all commands executing in parallel complete.

The parallel script is simplified for clarity of exposition:
  it does not show the details of input splitting, handling of \ttt{SIGPIPE} deadlocks, and other technical details that are handled by the current implementation.

Readers might be wondering about the correctness of having two \ttt{sed} commands in the parallel script:
  won't the string ``mispelled words'' appear twice in the output?
Note, however, that the output of the \ttt{wc} stage (fifo \ttt{t4}) contains a single line.
As a result, the second \ttt{sed} will not be given any input line and thus will not produce any output.

\section{An Order-aware Dataflow Model}
\label{model}

In this section we describe the order-aware dataflow model (ODFM) and its semantics.

\subsection{Preliminaries}
\label{model:prelim}

As discussed earlier~\sx{bg}, the two main shell abstractions are (i) data streams, and (ii) commands communicating via streams.
We represent streams as named variables and commands as functions that read from and write to streams.

We first introduce some basic notation formalizing data streams on which our dataflow description language works. 
For a set $D$, we write $D\kstar$ to denote the set of all finite words over $D$.
For words $x, y \in D\kstar$, we write $x \cdot y$ or $xy$ to denote their concatenation.
We write $\epsilon$ for the empty word and $\bot$ for the End-of-File condition.
We say that $x$ is a \emph{prefix} of $y$, and we write $x \leq y$, 
if there is a word $z$ such that $y = xz$.
The $\leq$ order is reflexive, antisymmetric, and transitive 
(\ie it is a partial order), and is often called the \emph{prefix order}.
We use the notation $D\kstar \cdot \bot$ to denote a 
\emph{closed stream}, abstractly representing
a file/pipe stream that has been closed, \ie one which no process will open for
writing.
The notation $D\kstar$ is used to denote an \emph{open stream}, abstractly representing
an open pipe. Later, other process may add new elements at the end of this
value. 
\rev{
In the rest of out formalization we focus on terminating streams, and therefore terminating programs, since all of the data processing scripts that we have encountered are terminating.
We discuss extensions for infinite streams in \S\ref{discussion}.
}

\begin{figure}%
    \[
        \begin{array}{rcl}
            P &:=& \cI ; \cO ; \overline{\cE}\\
            \cI &:=& \mathsf{input}~\overline{x}\\
            \cO &:=& \mathsf{output}~\overline{x}\\
            \cE &:=& \overline{x}_o \leftarrow f(\overline{x}_i)\\
        \end{array}
    \] 
    \caption{Dataflow Description Language (DDL).
    }
    \label{fig:dataflow_descrip}
\end{figure}

\subsection{Dataflow Description Language}
\label{graph-components}

Figure~\ref{fig:dataflow_descrip} presents the Dataflow Description Language (DDL) for defining dataflow graphs (DFG).
A program $p \in P$ in DDL is of the form $\cI ; \cO ; \overline{\cE}$.
$\cI$ and $\cO$ represent sets of edges, vectors of the form $\overline{x} = \tup{x_1, x_2, \ldots x_n}$.
Variables $x_1, x_2, \ldots $ represent DFG edges, \ie streams used as a communication channel between DFG nodes and as the input and output of the entire DFG.

$\cI$ is of the form $\mathsf{input}~\overline{x}$, where $\overline{x}$ is the set of the input variables.
Each variable $x \in \cI$ represents a file $\texttt{file}(x)$ that is read from the \unix filesystem.
Note that multiple input variables can refer to the same file. 

$\cO$ is of the form $\mathsf{output}~\overline{x}$, where $\overline{x}$ is the set of output variables.
Each variable $x \in \cO$ represents a file $\texttt{file}(x)$ that is written to the \unix filesystem.

$\cE$ represents the nodes of the DFG.
A node $\overline{x}_o \leftarrow f(\overline{x}_i)$  represents a function from list of input variables (edges) $\overline{x}_i$ to output variables (edges) $\overline{x}_o$.  
We require that $f$ is monotone with respect to a lifting of the prefix order for a sequence of inputs;
  that is, $\forall, v, v', \overline{v}_i$, if $v \leq v'$, $ \tup{v_1, \ldots v_n} = f(v, \overline{v}_i)$ and $\tup{v'_1, \ldots v'_n} =  f(v', \overline{v}_i)$, then
$\forall~k \in [1, n].~v_k \leq v'_k$.
This captures the idea that a node cannot retract output that it has already produced. 

We wrap all functions $f$ with an execution wrapper $\exec{\cdot}$ that ensures that all outputs of $f$ are closed when its inputs are closed:
\[\exec{f(v_1 \cdot \bot, v_2 \cdot \bot, \ldots v_n \cdot \bot)} = \tup{v'_1 \cdot \bot, v'_2 \cdot \bot, \ldots v'_k \cdot \bot}\]
This is helpful to ensure termination.
From now on, we only refer to the wrapped function semantics.

We also assume that 
commands do not produce output if they have not consumed any input, \ie the following is true:
\[
    \tup{\epsilon, \ldots,\epsilon} = \exec{f(\epsilon, \ldots, \epsilon)}
\]

A variable in DDL is assigned \emph{only once} and consumed by \emph{only one} node. 
DDL does not allow the dataflow graph to contain \emph{any cycles}.
This also holds for variables in $\cI$ and $\cO$, which cannot refer to the same variables in $\cI$ and never assigned a different value in $\cE$. 
Similarly, variables in $\cO$ are not read by any node in $\cE$.
All variables which are not included in $\cI$ and $\cO$ abstractly represent temporary files/pipes which are created during the execution of a shell script.
We assume that within a dataflow program, all variables are reachable from some input variables.

\begin{figure}
    \begin{center}$    
    \begin{array}{c}
        \infral{
                \begin{array}{ccc}
                    \tup{x'_1, \ldots x'_p} \leftarrow f(x_1, \ldots x_k \ldots
                    x_n) \in \cE &
                    v_k \cdot v_x \leq \Gamma(x_k) \\
                    \vert v_x \vert = 1 \vee v_x = \bot &
                    i \in [1, k-1] \cup [k+1, n]. v_i = \sigma(x_i) \\
                    k \in \mathsf{choice}_f(v_1, \ldots v_n) &
                    \tup{v^m_1, \ldots v^m_p} = \exec{f(v_1, \ldots v_k \circ
                    v_x \ldots v_n)}_s 
                \end{array}
            }
            {
                \cI, \cO, \cE \vdash \Gamma[
                    x'_1 \rightarrow v'_1, \ldots x'_p \rightarrow v'_p],
                \sigma[x_k \rightarrow v_k] 
                \rightsquigarrow \\ \Gamma[x'_1 \rightarrow v'_1 \cdot v^m_1,
                \ldots x'_p \rightarrow v'_p \cdot v^m_p], 
                \sigma[x_k \rightarrow v_k \cdot v_x]
            }
            {Step}
   \end{array}
    $\end{center}
    \caption{Small Step Execution Semantics}
    \label{fig:exec_sem}
\end{figure}

\heading{Execution Semantics}
Figure~\ref{fig:exec_sem} presents the small step execution semantics for DDL.
Maps $\Gamma$ associates variable names to the data contained in the stream it represents.
Map $\sigma$ associates variable names to the data in the stream that has already been processed---representing the read-once semantics of \unix pipes.
Let $\tup{x'_1, \ldots x'_p} 
\leftarrow f(x_1, \ldots x_n)$ be a node in our DFG program. 
The function $\mathsf{choice}_f$ represents the order in which a commands consumes its inputs by returning a set of input indexes on which the function blocks on waiting to read.
For example, the $\mathsf{choice}_{cat}$ function for the command \ttt{cat} always returns the next non-closed index---as \ttt{cat} reads its inputs in sequence, each one until depletion.
\[
    \{k+1\} = \mathsf{choice}_{cat}(v_1 \cdot \perp, \ldots v_k \cdot \perp, v_{k+1}, \ldots v_n)
\]
For a $\mathsf{choice}_f$ function to be valid, it has to return an input index that has not been closed yet. Formally, 
\[
    \cS = \mathsf{choice}_f(v_1, \ldots v_k \cdot \perp, \ldots v_n) \implies k \notin
    \cS
\]
We assume that the set returned by $\mathsf{choice}_f$ cannot be \emph{empty} unless all input indexes are closed, meaning that all nodes consume all of their inputs until depletion even if they do not need the rest of it for processing.

The small step semantics nondeterministically picks a variable $x_k$, such that $k \in \mathsf{choice}_f(v_1, \ldots
v_n)$, \ie $f$ is waiting to read some input from $x_k$, and
$\sigma(x_k) < \Gamma(x_k)$, \ie there is data on the stream represented by
variable $x_k$ that has to be processed.
The execution then retrieves the next message $v_x$ to process,
and computes new messages $v^m_1, \ldots v^m_p$ to pass on to the output streams $x'_1, \ldots x'_p$.
Note that any of these messages (input or output) might be $\bot$. 
We pass $v_k \circ v_x$, which denotes that the previous data $v_k$ is now being combined with the new message $v_x$, to function $f$.
For all functions $f$ and new messages $v$, given
$
\tup{v'_1, v'_2, \ldots v'_p}
    = \exec{f(v_1, \ldots, v_k, \ldots v_n)}$ 
    we assume
the following constraint holds: 
\[
    \tup{v^m_1, \ldots v^m_p} = \exec{f(v_1, \ldots v_k \circ v_x, \ldots
    v_n)}_s 
\]
\[
   \iff
           \tup{v'_1 \cdot v^m_1, \ldots v'_p \cdot v^m_p}
    = \exec{f(v_1 \ldots, v_k \cdot v_x, \ldots v_n)}
\]
This constraint ensures that first processing arguments $v_1, \ldots v_k, \ldots
v_n$ and then message $v_x$ append to argument $k^{th}$ stream is equivalent to
processing messages $v_1, \ldots v_k \cdot v_x, \ldots v_n$ at once. Having this
property, allows our system to process messages as they arrive or wait for all
the messages to arrive, without changing the semantics of the execution.

The messages $v^m_1, \ldots v^m_p$ are passed on to their respective output
streams (by updating $\Gamma$). Note that the size of the output messages could vary, and they could even be empty.
Finally, $\sigma$ is updated to denote that $v_x$ has been processed.

\heading{Execution} 
Let $\tup{\cI, \cO, \cE}$ be a dataflow program,
where 
$\cI = \mathsf{input}~\overline{x}_i$ 
are the input variables, and $\mathsf{output}~\overline{x}_o$ are the output
variables.
Let $\sigma_i$ be the initial mapping from all variable names in 
the dataflow program $\tup{\cI, \cO, \cE}$ to 
empty string $\epsilon$. Let $\Gamma_i$ be the initial mapping for
variables in the dataflow program, such that all non-input variables $x \notin
\overline{x}_i$, map to the empty string $\Gamma_i(x) = \epsilon$. 
In contrast, all input variables $x \in \overline{x}_i$, \ie files already present in the file system, are mapped to the contents of the respective input file 
$\Gamma_i(x) = v \cdot \perp$.

\begin{figure}
    \begin{center}$    
    \begin{array}{c}
        \eat{
        \infral{
                \tup{x_1, x_2} \leftarrow \mathsf{tee}(x) \in \cE 
            }
        {\cI, \cO, \cE \vdash \Gamma[x \rightarrow v \cdot \bot, 
        x_1 \rightarrow v\cdot \bot, x_2 \rightarrow v
        \cdot \bot]}
        {Tee}
        \\
        \\
    }
        \infral{
                \begin{array}{cc}
                    \tup{x'_1, \ldots x'_p} \leftarrow f(x_1, \ldots x_n) \in
                    \cE&
                    \tup{v'_1 \cdot\bot, \ldots v'_p \cdot \bot} 
                    = \exec{f(v_1 \cdot \bot, \ldots v_n \cdot
                    \bot)} 
                \end{array}
            }
            {
                \cI, \cO, \cE \vdash \Gamma[x'_1 \rightarrow v'_1 \cdot \bot,
                \ldots x'_p \rightarrow v'_p \cdot \bot,
                x_1 \rightarrow v_1 \cdot \bot, \ldots x_n \rightarrow v_n \cdot
                \bot] 
            }
            {Completion}
        \end{array}
    $\end{center}
    \caption{Execution Constraints}
    \label{fig:exec_constraint}
\end{figure}

When no more small step transitions can take place (\ie all commands have finished processing), the dataflow execution terminates and the contents of output variables in $\cO$ can be written to their respective output files. 
Figure~\ref{fig:exec_constraint} represents the
constraint that has to be satisfied by $\Gamma$ at the end of execution, \ie when all variables are processed. We now prove some auxiliary theorems and lemmas to show that dataflow programs always terminate and that when they terminate, the constraint in Figure~\ref{fig:exec_constraint} holds.

\eat{
\begin{theorem}\label{thm:constraint}
    Let $\tup{x'_1, x'_2, \ldots, x'_p} \leftarrow f(x_1, \ldots x_n) \in \cE$,
    the following is true about $\Gamma$ and $\sigma$ at any point during the execution:
    \[
        \tup{\Gamma(x'_1), \ldots \Gamma(x'_p)} = \exec{f(\sigma(x_1), \ldots
        \sigma(x_n))}
    \]
\end{theorem}
\begin{proof}
  In appendix~\sx{thm:constraint-apx}.
\end{proof}

\begin{theorem}
    \label{thm:progress}
    Eventually for all variables $x$, $\exists~v. \Gamma(x) = v \cdot \perp$,
    \ie all variables will eventually be closed.
\end{theorem}
\begin{proof}
  In appendix~\sx{thm:progress-apx}.
\end{proof}
}
\eat{
\begin{theorem}
    The Dataflow program will always terminate. And when it terminates,
    constraint~\ref{fig:exec_constraint} will be true at the end state.
\end{theorem}
\begin{proof}
  In appendix~\sx{thm:termination-apx}.
\end{proof}
}

\begin{theorem}\label{thm:constraint-apx}
    Let $\tup{x'_1, x'_2, \ldots, x'_p} \leftarrow f(x_1, \ldots x_n) \in \cE$.
    During any point within the execution of the DFG, the following statement is
    true:
    \[
        \tup{\Gamma(x'_1), \ldots \Gamma(x'_p)} = \exec{f(\sigma(x_1), \ldots
        \sigma(x_n))}
    \]
\end{theorem}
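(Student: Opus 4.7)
The plan is to prove this invariant by induction on the number of small-step transitions taken from the initial state $\tup{\Gamma_i, \sigma_i}$. The invariant must be shown to hold for every node $\tup{x'_1, \ldots, x'_p} \leftarrow f(x_1, \ldots, x_n) \in \cE$ simultaneously, so the inductive claim is a conjunction over all nodes in $\cE$.

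For the base case, I would use the initial state: $\sigma_i$ maps every variable to $\epsilon$, and $\Gamma_i$ maps every non-input variable to $\epsilon$. Since variables in $\cI$ are never assigned by any node in $\cE$, each output variable $x'_j$ of any node is a non-input variable, and so $\Gamma_i(x'_j) = \epsilon$. The stated DDL assumption $\tup{\epsilon, \ldots, \epsilon} = \exec{f(\epsilon, \ldots, \epsilon)}$ closes the base case immediately.

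For the inductive step, suppose the invariant holds before a Step transition fires on the node $n^\star = \tup{x'_1, \ldots, x'_p} \leftarrow f(x_1, \ldots, x_n)$ with chosen index $k$ and new message $v_x$. I would split the argument into two cases. First, for every node $n \neq n^\star$, observe that the Step rule only modifies $\sigma$ at $x_k$ and $\Gamma$ at $x'_1, \ldots, x'_p$. Because DDL requires each variable to be consumed by exactly one node and assigned by exactly one node, none of these updated variables appears among $n$'s inputs or outputs, so both sides of the invariant for $n$ are unchanged. Second, for $n^\star$ itself, I would apply the inductive hypothesis to obtain $\tup{v'_1, \ldots, v'_p} = \exec{f(v_1, \ldots, v_n)}$ with $\Gamma(x'_j) = v'_j$ pre-step, and then use the stated compatibility constraint between $\exec{\cdot}_s$ and $\exec{\cdot}$, namely
\[
    \tup{v^m_1, \ldots, v^m_p} = \exec{f(v_1, \ldots, v_k \circ v_x, \ldots, v_n)}_s \iff \tup{v'_1 \cdot v^m_1, \ldots, v'_p \cdot v^m_p} = \exec{f(v_1, \ldots, v_k \cdot v_x, \ldots, v_n)},
\]
to conclude that the new values $\Gamma(x'_j) = v'_j \cdot v^m_j$ and $\sigma(x_k) = v_k \cdot v_x$ still satisfy the invariant.

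The main obstacle is handling this incremental-vs-batch compatibility cleanly: the Step rule only computes the \emph{delta} $v^m_j$ produced by appending $v_x$ to $\sigma(x_k)$, whereas the invariant is phrased in terms of the total $\exec{f(\sigma(x_1), \ldots, \sigma(x_n))}$. All other aspects (disjointness of $\sigma$/$\Gamma$ updates across nodes, well-foundedness of reads since $v_k \cdot v_x \leq \Gamma(x_k)$, and the fact that $x'_j \notin \cI$) follow directly from DDL's single-assignment/single-consumption discipline. Once the delta-to-total correspondence is invoked, the inductive step reduces to a straightforward concatenation of words.
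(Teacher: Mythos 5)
Your proposal is correct and follows essentially the same route as the paper's proof: induction on the number of small steps, with the base case closed by the assumption $\tup{\epsilon,\ldots,\epsilon} = \exec{f(\epsilon,\ldots,\epsilon)}$, the unaffected-node case handled via the single-writer/single-reader discipline, and the stepped node handled by the stated compatibility between $\exec{\cdot}_s$ and $\exec{\cdot}$. The only cosmetic difference is that you organize the case split by node identity while the paper splits on whether $\sigma$ changed at the node's inputs; the substance is identical.
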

\begin{proof}
    \noindent{Proof by induction on the number of execution steps.}\\
    \noindent{\it Base Case:}
    Let $\Gamma_i$ and $\sigma_i$ be the initial mappings.
   For $x \in \tup{x_1, \ldots x_n}$, $\sigma_i(x) =
    \epsilon$. For $x \in \tup{x'_1, \ldots x'_p}$, $\Gamma_i(x) = \epsilon$ 
    (since $x'_1, \ldots x'_p$ are not input variables to the DFG, they will be
    initialized to $\epsilon$).
    The following property is true for all functions $f$:
    \[
        \tup{\epsilon, \ldots \epsilon} = \exec{f(\epsilon, \ldots \epsilon)}
    \]
    Therefore, or the initial mappings, $\Gamma_i$ and $\sigma_i$,
    \[
        \tup{\Gamma_i(x'_1), \ldots \Gamma_i(x'_p)} = \exec{f(\sigma_i(x_1), \ldots
        \sigma_i(x_n))}
    \]

    \noindent{\it Induction Hypothesis:}
   Let $\Gamma$ and $\sigma$ be a snapshot of stream mappings during the
    execution of the DFG such that the following statement is
    true:
    \[
        \tup{\Gamma(x'_1), \ldots \Gamma(x'_p)} = \exec{f(\sigma(x_1), \ldots
        \sigma(x_n))}
    \]

    \noindent{\it Induction Case:}
    Let $\Gamma$ and $\sigma$ be a snapshot of stream mappings such that the
    induction hypothesis is true.
    Let $\Gamma'$ and $\sigma'$ be the snapshot after a single step of the
    execution takes place, given the snapshots $\Gamma$ and $\sigma$.

    If for all $i \in [1, n]$, $\sigma(x_i) = \sigma'(x_i)$, then a message
    updating $x'_1, \ldots x'_p$ was not processed (they can only be written by
    this node). Therefore, for all $k \in [1, p]. \Gamma(x'_p) = \Gamma'(x'_p)$
    and the following statement is true (assuming Induction Hypothesis)
    \[
        \tup{\Gamma'(x'_1), \ldots \Gamma'(x'_p)} = \exec{f(\sigma'(x_1), \ldots
        \sigma'(x_n))}
    \]

    If there exists an $i \in [1, n]$ such that $\sigma'(x_i) = \sigma(x_i)
    \cdot v_x$, $v_x \neq \epsilon$, then a message $v_x$ was processed. Note
    that the above statement can only be true for a single $i$. For all $k \neq
    i$, $\sigma(x_k) = \sigma'(x_k)$.

    The following statement is true from Induction Hypothesis:
    \[
        \tup{\Gamma(x'_1), \ldots \Gamma(x'_p)} = \exec{f(\sigma(x_1), \ldots
        \sigma(x_k), \ldots \sigma(x_n))}
    \]
    and from small step semantics, for all $k \in [1, p]$:
    \[
        \Gamma'(x_k) = \Gamma(x_k) \cdot v^m_k
    \]
    where:
    \[
        \tup{v^m_1, \ldots v^m_p} = \exec{f(\sigma(x_i), \ldots \sigma(x_k) \circ v_x, \ldots
    \sigma(x_n))}_s 
\]
    Using the definition of $\exec{\cdot}_s$, the following statement is true:
    \[
        \tup{\Gamma(x'_1) \cdot v^m_1, \ldots \Gamma(x'_p) \cdot v^m_p} =
        \exec{f(\sigma(x_1), \ldots
        \sigma(x_k) \cdot v_x, \ldots \sigma(x_n))}
    \]
Therefore, the following is true:
    \[
        \tup{\Gamma'(x'_1), \ldots \Gamma'(x'_p)} = \exec{f(\sigma'(x_1), \ldots
        \sigma'(x_k), \ldots \sigma'(x_n))}
    \] 
    The small step semantics will preserve this property.
    Therefore, by induction, 
    for all $\tup{x'_1, x'_2, \ldots, x'_p} \leftarrow f(x_1, \ldots x_n) \in \cE$,
    the following is always true about $\Gamma$ and $\sigma$, during any point
    within the execution:
    \[
        \tup{\Gamma(x'_1), \ldots \Gamma(x'_p)} = \exec{f(\sigma(x_1), \ldots
        \sigma(x_n))}
    \]
\end{proof}

\begin{lemma}
    \label{thm:progressf-apx}
    Let $\tup{x'_1, x'_2, \ldots, x'_p} \leftarrow f(x_1, \ldots x_n) \in \cE$.
    If $\forall i \in [1, n]$ $\Gamma(x_i)  = v_i \cdot \perp$, then eventually 
    $\forall i \in [1, n]$, $\sigma(x_i) = v_i \cdot \perp$ and eventually
    $\forall i \in [1, p]$, $\Gamma(x'_i) = v'_i \cdot \perp$.
\end{lemma}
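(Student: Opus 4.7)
My plan is to prove the lemma in two stages that mirror its two conclusions. For the first conclusion, that every $\sigma(x_i)$ eventually reaches $v_i \cdot \perp$, I would introduce the well-founded measure $M(\Gamma,\sigma) = \sum_{i=1}^{n} \bigl(|\Gamma(x_i)| - |\sigma(x_i)|\bigr)$, counting stream elements including the trailing $\perp$. I would first observe that, under the hypothesis $\Gamma(x_i) = v_i \cdot \perp$, the value $\Gamma(x_i)$ is frozen for the remainder of the execution: because each edge has a single writer (by DDL's single-assignment discipline) and that writer has already emitted $\perp$ on $x_i$, any subsequent Step of the producer can only append $\epsilon$ to $\Gamma(x_i)$. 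Consequently, only Step transitions at this node can change $M$, and they can only decrease it.

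The key sub-lemma is that whenever $M > 0$ at least one Step of this node is enabled. If $M > 0$, then some input $x_k$ satisfies $\sigma(x_k) < v_k \cdot \perp$, so $\sigma(x_k)$ does not end in $\perp$; hence not every input is closed, and by the hypothesis that $\mathsf{choice}_f$ is non-empty unless all indices are closed, the set $\mathsf{choice}_f(\sigma(x_1), \ldots, \sigma(x_n))$ is non-empty. By the validity constraint on $\mathsf{choice}_f$, any returned index $k'$ points to a stream that is not closed in $\sigma$, and since $\sigma(x_{k'}) \leq \Gamma(x_{k'}) = v_{k'} \cdot \perp$ this forces $\sigma(x_{k'}) < \Gamma(x_{k'})$ strictly. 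We can therefore pick $v_x$ of length $1$ or equal to $\perp$ with $\sigma(x_{k'}) \cdot v_x \leq \Gamma(x_{k'})$, satisfying every premise of the Step rule. Each such firing appends a non-empty $v_x$ to $\sigma(x_{k'})$ and thus strictly decreases $M$. Under the standard weak-fairness assumption that a continuously-enabled transition is eventually scheduled, $M$ drops to $0$, giving $\sigma(x_i) = v_i \cdot \perp$ for every $i$.

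For the second conclusion I would apply Theorem~\ref{thm:constraint-apx} to the resulting snapshot: it yields $\tup{\Gamma(x'_1), \ldots, \Gamma(x'_p)} = \exec{f(v_1 \cdot \perp, \ldots, v_n \cdot \perp)}$, and the defining property of the execution wrapper, $\exec{f(v_1 \cdot \perp, \ldots, v_n \cdot \perp)} = \tup{v'_1 \cdot \perp, \ldots, v'_p \cdot \perp}$, then immediately supplies $\Gamma(x'_i) = v'_i \cdot \perp$ for all $i \in [1,p]$.

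The main obstacle is the fairness argument: the semantics of Figure~\ref{fig:exec_sem} picks transitions nondeterministically across the entire DFG, so a priori a scheduler could starve this particular node. The way to close this gap is to exploit the locality just noted---the enabling condition for this node's Step rule depends only on $\Gamma$ and $\sigma$ along the edges incident to it, and those values do not change except via this node's own firings. Hence an enabled Step here remains enabled until it is taken, and any weakly fair execution must eventually take it. I would make this fairness assumption explicit at the outset, since it is required both here and implicitly in the subsequent termination theorem.
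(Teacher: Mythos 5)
Your proof is correct and follows essentially the same route as the paper's: drive $\sigma$ on each input to the closed value using the non-emptiness and validity of $\mathsf{choice}_f$, then obtain closed outputs from Theorem~\ref{thm:constraint-apx} together with the $\exec{\cdot}$ wrapper property. You supply details the paper leaves implicit---the well-founded measure, the persistence-of-enabledness argument, and the explicit weak-fairness assumption---but these are elaborations of the same argument rather than a different approach.
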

\begin{proof}
    If for all $i \in [1, n]$, $\Gamma(x_i)$ is closed
    and $\mathsf{choice}_f$ is non empty unless $\sigma(x_i)$ is closed for all
    $i$, then eventually the
    execution will take a step to update $\sigma$ till, for all $i \in [1, n]$,
    $\sigma(x_i)$ is closed. 
    When all inputs are closed, $\exec{\cdot}$ dictates that all outputs will be
    closed as well.
    Using theorem~\ref{thm:constraint-apx}, $\Gamma(x'_i)$ will be closed.
\end{proof}

\begin{theorem}
    \label{thm:progress-apx}
    Eventually for all variables $x$, $\exists~v. \Gamma(x) = v \cdot \perp$,
    \ie all variables will eventually be closed.
\end{theorem}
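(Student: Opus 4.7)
The plan is to do induction on the topological depth of each variable in the dataflow graph, using Lemma~\ref{thm:progressf-apx} as the inductive step. Because the DDL forbids cycles and requires that every variable is reachable from some input variable, each variable $x$ has a well-defined \emph{depth}: input variables in $\cI$ have depth $0$, and a variable produced as an output of a node $\tup{x'_1,\ldots,x'_p} \leftarrow f(x_1,\ldots,x_n)$ has depth $1 + \max_i \mathrm{depth}(x_i)$. This gives a finite ordinal to every variable.

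For the base case (depth $0$), the variable $x \in \overline{x}_i$ is an input variable, and by the definition of the initial mapping $\Gamma_i$ we have $\Gamma_i(x) = v \cdot \bot$ for some $v$. Since $\Gamma(x)$ can only be extended monotonically by steps that write to $x$, and no node writes to an input variable (input variables are not reassigned in $\cE$), the closed value persists, so the required condition holds immediately and for all time.

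For the inductive step, suppose every variable of depth less than $d$ eventually becomes closed. Let $x$ have depth $d > 0$; then $x$ appears as an output of some (unique) node $\tup{x'_1,\ldots,x'_p} \leftarrow f(x_1,\ldots,x_n) \in \cE$, where each input $x_j$ has depth strictly less than $d$. By the induction hypothesis applied to each $x_j$, there is some point in the execution by which every $\Gamma(x_j)$ is closed; since ``closedness'' of $\Gamma$ is preserved by subsequent steps, there is a single moment after which all inputs of this node are simultaneously closed. Lemma~\ref{thm:progressf-apx} then tells us that from this point on, eventually all outputs $x'_k$ of the node have $\Gamma(x'_k) = v'_k \cdot \bot$. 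In particular, $x$ (being one of the $x'_k$) becomes closed.

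The main subtle point I expect to need care is the compositional use of ``eventually'': I must argue that finitely many ``eventually closed'' conclusions can be combined into a single global moment after which all relevant variables are closed. This is fine because the set of variables (and hence the depth) is finite, and closedness is stable under further reduction, so I can take the maximum across a finite collection of witness prefixes of the reduction sequence. Once this is established for every variable, the theorem follows: at that final stage every variable $x$ satisfies $\Gamma(x) = v \cdot \bot$ for some $v$, which is exactly the stated conclusion and also matches the constraint in Figure~\ref{fig:exec_constraint} by Theorem~\ref{thm:constraint-apx}.
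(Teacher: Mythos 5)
Your proposal is correct and follows essentially the same route as the paper: the paper's proof defines the set $\cC$ of eventually-closed variables, notes $\cI \subseteq \cC$, uses Lemma~\ref{thm:progressf-apx} to show $\cC$ is closed under node application, and concludes by acyclicity---your induction on topological depth is just a more explicit formalization of that last step. The care you take in combining finitely many ``eventually'' witnesses into one global moment is a detail the paper glosses over, but it does not change the argument.
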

\begin{proof}
    Let $\cC$ be the set of variables which will be closed eventually. 
    Note that $\cI \subseteq \cC$ (all input variables to the DFG will be
    eventually closed).
    Using Lemma~\ref{thm:progressf-apx},
    for any node 
$\tup{x'_1, x'_2, \ldots, x'_p} \leftarrow f(x_1, \ldots x_n) \in \cE$,
if $x_1, \ldots x_n \in \cC$, then $x'_1, \ldots x'_p \in \cC$.
Since the dataflow program contains no cycles, eventually all variables
    reachable from the input variables are in $\cC$. 
\end{proof}

\begin{theorem}
  \label{thm:termination-apx}
    The Dataflow program will always terminate. 
    Let $\Gamma$ and $\sigma$ be the stream mappings when the DFG terminates.
    The for $\Gamma$ and $\sigma$, the
    constraint~\ref{fig:exec_constraint} 
    will be true.
\end{theorem}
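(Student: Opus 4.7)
The plan is to obtain termination from Theorem~\ref{thm:progress-apx} and Lemma~\ref{thm:progressf-apx}, and to obtain the final constraint from Theorem~\ref{thm:constraint-apx} together with the closure property built into the $\exec{\cdot}$ wrapper. So the proof splits cleanly into two pieces: first, ``execution halts''; second, ``when it halts, $\Gamma$ satisfies the Completion rule.''

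For termination, Theorem~\ref{thm:progress-apx} already gives us that every variable $x$ eventually has $\Gamma(x) = v_x \cdot \bot$. Applying Lemma~\ref{thm:progressf-apx} node by node, in any topological order of $\cE$ (which exists because the DFG is acyclic), we conclude that eventually every $\sigma(x_i)$ also equals the closed value $v_i \cdot \bot$. Once this happens, the Step rule of Figure~\ref{fig:exec_sem} cannot fire for any node: its premise $v_k \cdot v_x \leq \Gamma(x_k)$ with $v_k = \sigma(x_k)$ and $|v_x| = 1 \lor v_x = \bot$ requires a strictly larger prefix of $\Gamma(x_k)$ than $\sigma(x_k)$, but $\sigma(x_k) = \Gamma(x_k)$ leaves no room. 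Hence no small-step transition applies and the program halts.

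For the constraint, fix any node $\tup{x'_1, \ldots, x'_p} \leftarrow f(x_1, \ldots, x_n) \in \cE$ at the halting configuration. By the argument above, $\sigma(x_i) = \Gamma(x_i) = v_i \cdot \bot$ for each input. Theorem~\ref{thm:constraint-apx} then yields
\[
  \tup{\Gamma(x'_1), \ldots, \Gamma(x'_p)} \;=\; \exec{f(\sigma(x_1), \ldots, \sigma(x_n))} \;=\; \exec{f(v_1 \cdot \bot, \ldots, v_n \cdot \bot)}.
\]
By the defining property of $\exec{\cdot}$, the right-hand side is $\tup{v'_1 \cdot \bot, \ldots, v'_p \cdot \bot}$ for some $v'_j$, which is exactly the Completion constraint from Figure~\ref{fig:exec_constraint}.

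The main obstacle is the first part, where we must be careful about what ``halts'' means. Theorem~\ref{thm:progress-apx} speaks about $\Gamma$ becoming closed eventually, but termination additionally requires that no Step transition remains enabled, which involves $\sigma$. Two small points deserve attention: (i) the premise of $\mathsf{choice}_f$'s validity ensures the set is nonempty whenever some $\sigma(x_k)$ is not yet closed, so progress keeps happening until $\sigma$ catches up with $\Gamma$ on every edge; and (ii) because each Step strictly advances some $\sigma(x_k)$ by a one-element or $\bot$ chunk, and because acyclicity plus $\exec{\cdot}$-closure bound the eventual size of $\sum_x |\Gamma(x)|$, only finitely many Step transitions can occur in total. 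With these two observations, the halting configuration is well-defined and the constraint follows as above.
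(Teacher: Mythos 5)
Your proposal is correct and follows essentially the same route as the paper's proof: termination from Theorem~\ref{thm:progress-apx} and the final constraint from Theorem~\ref{thm:constraint-apx} combined with the closure property of $\exec{\cdot}$. The only difference is that you spell out why no Step transition remains enabled once $\sigma$ catches up with $\Gamma$ and why only finitely many steps occur, details the paper's terser proof leaves implicit.
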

\begin{proof}
    The DFG graph terminates when all variables are closed.
    From Theorem \ref{thm:progress-apx}, all variables will eventually be closed.
    Constraint~\ref{fig:exec_constraint} follows from 
    Theorem \ref{thm:constraint-apx}, all variables being closed when DFG
    terminates, and the properties of $\exec{\cdot}$.
\end{proof}

\section{From Shell Scripts to DFGs and Back Again}
\label{translations}

\begin{figure}[t]
\centering
  \[
    \begin{array}{l rcl}
        \text{Commands} & c &:=& (s=w)^* w r^* \mid \text{pipe}_{\vert} \, c^+ \ttt{&}^? \mid c \, r^+  \\
        & &\mid&c \, \ttt{&} \mid ( c ) \mid c_1 \ttt{;} \, c_2 \mid c_1 \, \ttt{&&} \, c_2  \\
        & &\mid&c_1 \, \ttt{||} \, c_2 \mid \ttt{!} \, c \mid \ttt{while } \, c_1 \, c_2  \\
        & &\mid&\ttt{for } \, s \, w \, c \mid \ttt{if } \, c_1 \, c_2 \, c_3 \mid \ttt{case } \, w \, cb^*  \\
        & &\mid&s() \, c \\
        \text{Redirections} & r &:=& \ldots \\ 
        \text{Words} & w &:=& (s \mid \text{\textvisiblespace} \mid k)^* \\
        \text{Control Codes} & k &:=& \ldots \\
        \text{Strings} & s &\in& \Sigma^+
    \end{array}
  \] 
\caption{
  A relevant subset of shell syntax presented in Smoosh~\cite{smoosh:20}.
}
\label{fig:shell-semantics}
\end{figure}

This section formalizes the translations between the shell and our order-aware dataflow model.

\subsection{Shell to ODFM}
\label{shell2dfg}

Given a shell script the compiler starts by recursing on the AST, replacing subtrees in a bottom-up fashion with dataflow programs.
Fig.~\ref{fig:shell-semantics} shows a relevant subset of shell syntax, adapted from Smoosh~\cite{smoosh:20}.
Intuitively, some shell constructs (such as pipes \ttt{|}) allow for the composition of the dataflow programs of their components, while others (such as \ttt{;}) prevent it.
Figure~\ref{fig:comm_trans} shows the translation rules for some interesting constructs, and Figure~\ref{fig:aux-trans} shows several auxiliary relations that are part of this translation.
We denote compilation from a shell AST to a shell AST as $c \uparrow c'$, and compilation to a dataflow program as $c \uparrow \tup{p, b}$ where $p$ is a dataflow program and $b \in \{ \text{bg}, \text{fg} \} $ denotes whether the program is to be executed in the foreground or background .

The first two rules {\sc CommandTrans} and {\sc CommandId} describe compilation of commands. The bulk of the work is done in $\mathbf{cmd2node}$, which, when possible, defines a correspondence between a command and a dataflow node. Predicate $\mathbf{pure}$ indicates whether the command is pure, \ie whether it only interacts with its environment by reading and writing to files. All commands that we have seen until now (\ttt{grep, sort, uniq}) satisfy this predicate. The relations $\mathbf{ins}$ and $\mathbf{outs}$ define a correspondence between a commands arguments and the nodes inputs and outputs. We assume that a variable is uniquely identified from the file that it refers too, therefore if two variables have the same name, then they also refer to the same files.
Finally, relation $\mathbf{func}$ extracts information about the execution of the command (such as its \ttt{choice} function and \ttt{w}) to be able to reconstruct it later on.
\rev{
Note that the four relations $\mathbf{pure}$, $\mathbf{ins}$, $\mathbf{outs}$, and $\mathbf{func}$ act as axioms, and the soundness of our model and translations depends on their correctness.
Prior work~\cite{pash,posh} has shown how to construct such relations for specific commands using annotation frameworks, with PaSh providing annotations for more than 50 commands in POSIX and GNU Coreutils---two large and widely used sets of commands.
}

\begin{figure*}
    \begin{center}$    
    \begin{array}{cccc}
    \multicolumn{4}{c}{
      \infral{
        \begin{array}{cc}
          \text{cmd2node}(w, \overline{x}_o \leftarrow f(\overline{x}_i)) &
          \text{add\_metadata}(f, \overline{as}, \overline{r}) = f' \\
          \end{array}
          \\
          \text{redir}(\overline{x}_o, \overline{x}_i, \overline{r}, \overline{x}_o', \overline{x}_i')
      }
      {
        \overline{as} w \overline{r} \uparrow \tup{\text{input} \overline{x}_i' ; \text{output} \overline{x}_o' ; \overline{x}_o' \leftarrow f'(\overline{x}_i'), \text{fg}}
      }
      {CommandTrans}
    }
    \\
    \\
    \multicolumn{2}{c}{
      \infral{
          \text{cmd2node}(w, \bot)
      }
      {
        \overline{as} w \overline{r} \uparrow \overline{as} w \overline{r}
      }
        {CommandId}
    } 
    &
        \multicolumn{2}{c}{
    \infral
    {
        c \uparrow \tup{p, b}
    }
    {
      c \, \ttt{&} \uparrow \tup{p, \text{bg}}
    }
    {BackgroundDfg}
}
\\
\\
    \multicolumn{2}{c}{
    \infral
    {
        c \uparrow c'
    }
    {
      c \, \ttt{&} \uparrow c' \, \ttt{&}
    }
    {BackgroundCmd}
}
    &
    \multicolumn{2}{c}{
      \infral
      {
        \begin{array}{cc}
          c_1 \uparrow \tup{p_1, \text{bg}} &
          c_2 \uparrow \tup{p_2, b}
        \end{array}
      }
      {
        c_1 \, \ttt{;} \, c_2 \uparrow \tup{\text{compose}(p_1, p_2), b}
      }
      {SeqBothBg}
    }
   \\
   \\
    \multicolumn{2}{c}{
      \infral
      {
        \begin{array}{cc}
            c_1 \uparrow c_1' &
          c_2 \uparrow \tup{p_2, \text{bg}} 
        \end{array}\\
          \text{opt}(p_2) \Downarrow c_2'
      }
      {
        c_1 \, \ttt{;} \,  c_2 \uparrow c_1' ; (c_2' \, \ttt{&})
      }
      {SeqRightBg}
    }
    &
    \multicolumn{2}{c}{
      \infral
      {
        \begin{array}{cc}
          c_1 \uparrow \tup{p_1, \text{fg}} &
          c_2 \uparrow \tup{p_2, \text{bg}} \\
          \text{opt}(p_1) \Downarrow c_1' &
          \text{opt}(p_2) \Downarrow c_2'
        \end{array}
      }
      {
        c_1 \, \ttt{;} \, c_2 \uparrow c_1' ; (c_2' \, \ttt{&})
      }
      {SeqBothFgBg}
    }
    \\
    \\
    \multicolumn{2}{c}{
      \infral
      {
        \begin{array}{ccc}
          c_1 \uparrow c_1' &
          c_2 \uparrow \tup{p_2, \text{fg}}
        \end{array}
            \\
          \text{opt}(p_2) \Downarrow c_2'
      }
      {
        c_1 \, \ttt{;} \, c_2 \uparrow c_1' \, \ttt{;} \, c_2'
      }
      {SeqRightFg}
    }
    &
    \multicolumn{2}{c}{
      \infral
      {
        \begin{array}{cc}
          c_1 \uparrow c_1' &
          c_2 \uparrow c_2'
        \end{array}
      }
      {
        c_1 \, \ttt{;} \, c_2 \uparrow c_1' \, \ttt{;} \, c_2'
      }
      {SeqNone}
    }
    \\
    \\
    \multicolumn{4}{c}{
        \infral{
            \begin{array}{cc}
            c_1 \uparrow \tup{p_1, b_1}, \ldots c_n \uparrow \tup{p_n, b_n},
                &
            p'_1~ \ldots p'_{n-1} = \text{map}(\text{connectpipe}, p_1~ \ldots
            p_{n-1})
            \end{array}
                \\
            p = \text{fold\_left}(\text{compose}, p'_1~p'_2~\ldots p'_{n-1}~ p_n)
        }
        {pipe_{\vert} c_1~c_2 \ldots c_n~\& \uparrow \tup{p, bg} }
        {PipeBG}
    }
    \\
    \\
    \multicolumn{4}{c}{
        \infral{
            \begin{array}{cc}
            c_1 \uparrow \tup{p_1, b_1}, \ldots c_n \uparrow \tup{p_n, b_n},
                &
            p'_1~ \ldots p'_{n-1} = \text{map}(\text{connectpipe}, p_1~ \ldots
            p_{n-1})
            \end{array}
            \\
            p = \text{fold\_left}(\text{compose}, p'_1~p'_2~\ldots p'_{n-1}~ p_n)
        }
        {pipe_{\vert} c_1~c_2 \ldots c_n \uparrow \tup{p, fg} }
        {PipeFG}
    } 
\end{array}
  $\end{center}
  \caption{A subset of the compilation rules.}
  \label{fig:comm_trans}
\end{figure*}

\begin{figure*}
    \begin{center}$    
    \begin{array}{cc}
      \multicolumn{2}{c}{
        \infral{
          \begin{array}{cc}
            (\text{vars}(\cE_1) \setminus \text{vars}(\cI_1)) \cap (\text{vars}(\cE_2) \setminus \text{vars}(\cI_2)) = \emptyset &
            \cI' = \cI_1 \setminus \cO_2 \cup \cI_2 \setminus \cO_1 \\ 
            \cO' = \cO_1 \setminus \cI_2 \cup \cO_2 \setminus \cI_1 &
              p' = \cI'; \cO'; \cE_1 \cup \cE_2'
            \\
            \text{vars}(\cE_1) \setminus (\text{vars}(\cI_1) \cup \text{vars}(\cO_1) \cap \text{vars}(\cI_2) = \emptyset &
             \text{valid}(p')
              \\
 \multicolumn{2}{c}
            {
              \text{vars}(\cE_2) \setminus (\text{vars}(\cI_2) \cup \text{vars}(\cO_2) \cap \text{vars}(\cI_1) = \emptyset
          }
              \\
              \multicolumn{2}{c}{
              \vec{x}_1 = \text{vars}(\cE_1) \cap \text{vars}(\cE_2) \setminus  
              (\text{vars}(I_1) \cup \text{vars}(I_2) \cup \text{vars}(O_3) \cup
              \text{vars}(O_2)) }
          \\
    \end{array}
        \\
        \begin{array}{ccc}
              \vec{x}_2 \notin \text{vars}(\cE_1) \cup \text{vars}(\cE_2) 
            &
            \vert \vec{x}_1 \vert = \vert \vec{x}_2 \vert &
              \cE_2' = \cE_2[\vec{x}_2 / \vec{x}_1]
        \end{array}
             \\
                \forall x \in (\cI_1 \cup \cI_2 \setminus \cI') \cup (\cO_1
                \cup \cO_2 \setminus \cO'). \text{pipe}(x)
        }
        {
          \text{compose}(\cI_1;\cO_1;\cE_1, \cI_2;\cO_2;\cE_2) = p'
        }
        {}
      }
      \\
      \\
      \infral{
        \begin{array}{cccc}
          \text{pure}(w) &
          \text{ins}(w, \overline{x}_i, f) &
          \text{outs}(w, \overline{x}_o, f) &
          \text{func}(w, f)
        \end{array}
      }
      {
        \text{cmd2node}(w, \overline{x}_o \leftarrow f(\overline{x}_i))
      }
      {} &
      \infral{
        \neg \text{pure}(w)
      }
      {
        \text{cmd2node}(w, \bot)
      }
      {}
      \\
      \\
        \infral{
          \begin{array}{ccc}
            \text{ins}_f(\overline{x}_i, w, x_{in}?) &
            \text{outs}_f(\overline{x}_o, w, x_{out}?) &
            \text{func}(w, f)
          \end{array}
        }
        {
          \text{node2cmd}(\overline{x}_o \leftarrow f(\overline{x}_i), w, x_{in}?, x_{out}?)
        }
        {}
      &
      \infral{
          \cO' = \cO[x_{\text{stdin}}/x_{\text{stdout}}]
      }
      {\text{connectpipe}(\cI; \cO; \cE) = \cI;\cO';\cE}
      {}
      \\
      \\
      \multicolumn{2}{c}{
        \infral{
          \begin{array}{cc}
            \text{node2cmd}(\overline{x}_o' \leftarrow f(\overline{x}_i'), w, x_{in}?, x_{out}?) &
            \text{get\_metadata}(f) = \tup{\overline{as}, \overline{r}} \\
            \text{redir}(\overline{x}_o', \overline{x}_i', \overline{r}, \overline{x}_o, \overline{x}_i) &
            \overline{r}' = \text{in\_out}(\overline{r}, x_{in}?, x_{out}?)
          \end{array}
        }
        {
            \text{instantiate}(\overline{x}_o \leftarrow f(\overline{x}_i)) = \overline{as} w \overline{r} 
        }
        {}
      }
    \end{array}
    $\end{center}
    \caption{Auxiliary relations for translating commands to nodes and back.}
    \label{fig:aux-trans}
\end{figure*}

The rule {\sc BackgroundDfg} sets the background flag for the underlying dataflow program; if the operand of a \ttt{&} is not compiled to a dataflow program then it is simply left as is. The last part holds for all shell constructs, we currently only create dataflow nodes from a single command.

The next set of rules refer to the sequential composition operator ``\ttt{;}''. This operator acts as a dataflow barrier since it enforces an execution ordering between its two operands. Because of that, it forces the dataflow programs that are generated from its operands to be optimized (with opt) and then compiled back to shell scripts (with $\Downarrow$). However, there is one case ({\sc SeqBothBg}) where a dataflow region can propagate through a ``\ttt{;}'' and that is if the first component is to be executed in the background. In this case ``\ttt{;}'' does not enforce an execution order constraint between its two operands and the generated dataflow programs can be safely composed into a bigger one. The rules for ``\ttt{&&}'' and ``\ttt{||}'' are similar (omitted).

The relation $\mathbf{compose}$ unifies two dataflow programs by combining the inputs of one with the outputs of the other and vice versa. Before doing that, it ensures that the composed dataflow graph will be valid by checking that there is at most one reader and one writer for each internal and output variable, as well as all the rest of the dataflow program invariants, \eg the absence of cycles~\sx{model}. 

The remaining rules (not shown) introduce synchronization constraints and are not part of our parallelization effort---for example, we consider all branching operators as strict dataflow barriers.

\subsection{ODFM to Shell}
\label{dfg2shell}

\begin{figure*}
    \begin{center}$    
    \begin{array}{cc}
      \multicolumn{2}{c}{
        \infral{
          \begin{array}{ccc}
            \cI = \text{input } x_{i1}, \dots x_{in} &
 \cO = \text{output } x_{o1}, \dots x_{om} & 
            \cE = n_1, \dots n_k   
          \end{array}
              \\
              \text{\ttt{cin}} = \text{\ttt{cat} } file(x_{i1}) \text{\ttt{ > }} pipe(x_{i1}) \text{\ttt{ & ; }} \ldots \text{\ttt{ ; }} \text{\ttt{cat} } file(x_k) \text{\ttt{ > }} pipe(x_{in}) \text{\ttt{ &}} \\
           \text{\ttt{cin}} = \text{\ttt{cat} } pipe(x_{o1}) \text{\ttt{ > }} file(x_{o1}) \text{\ttt{ & ; }} \ldots \text{\ttt{ ; }} \text{\ttt{cat} } pipe(x_{om}) \text{\ttt{ > }} file(x_{om}) \text{\ttt{ &}} \\
           \text{\ttt{cnodes}} = \text{instantiate}(n_1) \text{\ttt{ & ; }} \ldots \text{\ttt{ ; }} \text{instantiate}(n_k) \text{\ttt{ & }}
        }
        {
          \text{body}(\cI, \cO, \cE) = \text{\ttt{cin ; cout ; cnodes}}
        }
        {}
      }
      \\
      \\
            \infral{
        \text{vars}(p) = \tup{v_1, \ldots, v_n}
      }
      {
        \text{prologue}(p) = \text{\ttt{mkfifo /tmp/p{1..n}}}
      }
      {} 
      &
      \infral{
        \begin{array}{cc}
        \text{prologue}(p) = pr & \text{epilogue}(p) = ep
        \end{array}
      }
      {
        p \Downarrow  pr \text{\ttt{ ; }} \text{body}(p) \text{\ttt{ ; }} ep
      }
      {}
 \\
        \\
        \multicolumn{2}{c}{
      \infral{
        \text{vars}(p) = \tup{v_1, \ldots, v_n}
      }
      {
        \text{epilogue}(p) = \text{\ttt{wait ; rm /tmp/p{1..n}}}
      }
      {}
  }
    \end{array}
    $\end{center}
    \caption{DFG to shell transformations.}
    \label{fig:dfg2shell-rules}
\end{figure*}

Figure~\ref{fig:dfg2shell-rules} presents the compilation $\Downarrow$ of a dataflow program $p = \cI; \cO; \overline{\cE}$ to a shell program.
The compilation can be separated in a prologue, the main body, and an epilogue.

The prologue creates a named pipe (\ie \unix FIFO) for every variable in the program.
Named pipes are created in a temporary directory using the \ttt{mkfifo} command, and are similar in behavior to ephemeral pipes except that they are explicitly associated to a file-system identifier---\ie they are a special file in the file-system.
Named pipes are used in place of ephemeral pipes (\ttt{|}) in the original script. 

The epilogue inserts a \ttt{wait} to ensure that all the nodes in the dataflow have completed execution, and then removes all named pipes from the temporary directory using \ttt{rm}.
The design of the prologue-epilogue pair mimics how \unix treats ephemeral pipes, which correspond to temporary identifiers in a hidden file-system.

The main body expresses the parallel computation and can also be separated into three components.
For each of the input variables $x_i \in \cI$, we add a command that copies the file $\text{\texttt{f}} = \text{file}(x_i)$ to its designated pipe.
Similarly, for all output variables $x_i \in \cO$ we add a command that copies the designated pipe to the output file in the filesystem $\text{\texttt{f}} = \text{file}(x_i)$.
Finally, we translate each node in $\cE$ to a shell command that reads from the pipes corresponding to its input variables and writes to the pipes corresponding to its output variables. In order to correctly translate a node back to a command, we use the node-command correspondence functions (similar to the ones for $\uparrow$) that were used for the translation of the command to a node. Since a translated command might get its input from (or send its output to) a named pipe, we need to also add those as new redirections with \emph{in\_out}. For example, for a node $x_3 \leftarrow f(x_1, x_2)$ that first reads $x_1$ and then reads $x_2$, where $f = \text{\texttt{grep -f}}$, the following command would be produced:
\medskip
\begin{minted}[fontsize=\footnotesize]{bash}
grep -f p1 p2 > p3 &
\end{minted}
\medskip

\section{Parallelization Transformations}
\label{transformations}

In this section we define a set of transformations that expose data parallelism on a dataflow graph. We start by defining a set of helper DFG nodes and a set of auxiliary transformations to simplify the graph and enable the parallelization transformations. Then we identify a property on dataflow nodes that indicates whether the node can be executed in a data parallel fashion. We then define the parallelization transformations and we conclude with a proof that applying all of the transformations preserves the semantics of the original DFG.

\subsection{Helper Nodes and Auxiliary Transformations}
\label{auxiliary}

Before we define the parallelization transformations, we introduce several helper functions that can be used as dataflow nodes. The first function is $\splitc$. $\splitc$ takes a single input variable (file or pipe) and sequentially splits into multiple output variables. The exact size of the data written in each output variable is left abstract since it does not affect correctness but only performance.
\[
    \overline{x} \leftarrow \splitc(x_i)
\]
\[
       \infral{
           \overline{v} = \tup{v_1 \cdot \perp, v_2 \cdot \perp, \ldots v_{k-1}
           \cdot \perp, v_k, \epsilon, \ldots \epsilon},\\
            v_c = v_1 \cdot v_2 \ldots \cdot v_k
        }
        {\forall~v_c.~\exec{\splitc(v_c)} = \overline{v}}
        {}
\]
\[
       \infral{
            \overline{v} = \tup{v_1 \cdot \bot, 
            v_2 \cdot \bot, \ldots v_m \cdot \bot}, ~~~~~
            v_c = v_1 \cdot v_2 \ldots \cdot v_m \cdot \bot
        }
        {\forall~v_c.~\exec{\splitc(v_c)} = \overline{v}}
        {}
\]
The second function is $\concat$, which coincidentally behaves the same as the \unix command \ttt{cat}.
$\concat$, given a list of input variables, combines their values and assigns it
to a single output variable. 
Formally $\concat$ is defined below:
\[
    x_i \leftarrow \concat(\overline{x})
\]
\[
        \infral{
            \overline{v} = \tup{v_1 \cdot \bot, v_2 \cdot \bot, 
            \ldots v_{k-1} \cdot \bot, v_k, \ldots v_m},\\
            v_c = v_1 \cdot v_2 \ldots \cdot v_k
        }
        {\forall~\overline{v}.~\exec{\concat(\overline{v})} = v_c}
        {}
\]
\[
        \infral{
            \overline{v} = \tup{v_1 \cdot \bot, v_2 \cdot \bot, 
            \ldots v_m \cdot \bot},
            v_c = v_1 \cdot v_2 \ldots \cdot v_m \cdot \bot
        }
        {\forall~\overline{v}.~\exec{\concat(\overline{v})} = v_c}
        {}
\]
The third function is $\mathsf{tee}$, which behaves the same way as the \unix command \ttt{tee}, i.e. copying its input variable to several output variables.
Formally $\mathsf{tee}$ is defined below:
\[
    \overline{x} \leftarrow \mathsf{tee}(x_i)
\]
\[
    \infral{
        \overline{v} = \tup{v_c, v_c, \ldots, v_c},
    }
    {\forall~v_c.~\exec{\mathsf{tee}(v_c)} = \overline{v}}
    {}
\]
\[
    \infral{
        \overline{v} = \tup{v_c \cdot \bot, v_c \cdot \bot, \ldots, v_c \cdot \bot},
    }
    {\forall~v_c.~\exec{\mathsf{tee}(v_c \cdot \bot)} = \overline{v}}
    {}
\]
The final function is $\relay$. $\relay$ works as an identity function. 
Formally $\relay$ is defined below:
\[
    x_i \leftarrow \relay(x_j),
    {\forall~v.~\exec{\relay(v)} = v}
    ,
    {\forall~v.~\exec{\relay(v \cdot \bot)} = v \cdot \bot}
\]

\begin{figure*}
    \begin{center}
        $
    \begin{array}{c}
     \infral{
         x_j \leftarrow \mathsf{Unused}(\cI, \cO, \cE),
        ~\cE' = \cE[x_j/x_i]
     }
        {\cI, \cO, \cE \Longleftrightarrow \cI, \cO, \cE' \cup \{ x_j \leftarrow
        \relay(x_i)\}}
        {Relay}
        \\
        \\
        \infral{
            x_s, x'_s \leftarrow \mathsf{Unused}(\cI, \cO, \cE),
         \\   E = \big\{ 
            \tup{x_s, x'_s} \leftarrow \splitc(x),
            \tup{x_1, \ldots x_k} \leftarrow \splitc(x_s),
            \tup{x_{k+1}, \ldots x_m} \leftarrow \splitc(x'_s)
            \big\}
        }
        {\cI, \cO, \cE \cup \big\{ \tup{x_1, \ldots x_m} \leftarrow \splitc(x) 
        \big\} \Longleftrightarrow \cI, \cO, \cE \cup
        E}
        {Split-Split}
        \\
        \\
        \infral{
            x_c, x'_c \leftarrow \mathsf{Unused}(\cI, \cO, \cE),~~~
         \\   E = \big\{ 
           x_c \leftarrow \concat(\tup{x_1, \ldots x_k}),
            x'_c \leftarrow \concat(\tup{x_{k+1}, \ldots x_m}),
            x \leftarrow \concat(\tup{x_c, x'_c})
            \big\}
        }
        {\cI, \cO, \cE \cup \big\{x \leftarrow \concat(\tup{x_1, \ldots x_m})
        \big\} \Longleftrightarrow \cI, \cO, \cE \cup
        E}
        {Concat-Concat}
        \\
        \\
        \infral{
            \overline{x} \leftarrow \mathsf{Unused}(\cI, \cO, \cE),~~~
            E = \big\{
                \overline{x} \leftarrow \splitc(x_j),
                x_i \leftarrow \concat(\overline{x})
            \big\}
        }
        {\cI, \cO, \cE \cup \big\{x_i \leftarrow \relay(x_j) \big\} 
        \Longleftrightarrow \cI, \cO, \cE \cup
        E}
        {Split-Concat}
        \\
        \\
        \infral{
            x^u_1, x^d_1, x^u_2, x^d_2, \ldots x^u_n, x^d_n \leftarrow
            \mathsf{Unused}(\cI, \cO, \cE),\\
            E = \big\{ 
            \tup{x^u_1, x^d_1} \leftarrow \mathsf{tee}(x_1), 
 \tup{x^u_2, x^d_2} \leftarrow \mathsf{tee}(x_2), 
        \ldots
 \tup{x^u_n, x^d_n} \leftarrow \mathsf{tee}(x_n),\\
        x_o \leftarrow \concat(x^u_1, x^u_2, \ldots x^u_n),
        x'_o \leftarrow \concat(x^d_1, x^d_2, \ldots x^d_n)
            \big\}
        }
        {\cI, \cO, \cE \cup \big\{x \leftarrow\concat({x_1, x_2, \ldots x_n}),
        \tup{x_o, x'_o} \leftarrow  \mathsf{tee}(x)\big\} 
        \Longleftrightarrow
        \cI, \cO, \cE \cup E
        }
        {Tee-Concat}
        \\
        \\
     \infral{
     }
        {\cI, \cO, \cE \cup \{ x_j \leftarrow \concat(x_i)\} 
        \Longleftrightarrow \cI', \cO, \cE' \cup \{ x_j \leftarrow
        \relay(x_i)\}}
        {One-Concat}
        \\
        \\
     \infral{
     }
        {\cI, \cO, \cE \cup \{ 
        x_j \leftarrow \splitc(x_i)\} 
        \Longleftrightarrow \cI', \cO, \cE' \cup \{ x_j \leftarrow
        \relay(x_i)\}}
        {One-Split}
\\
\\
\infral{
            x \leftarrow \mathsf{Unused}(\cI, \cO, \cE),\\
            \overline{x}_i = \tup{x_1, x_2, \ldots x_n},
            \overline{x}_j = \tup{x'_1, x'_2, \ldots x'_n},
          \\  E = \big\{ 
            x'_1 \leftarrow \relay(x_1),
            x'_2 \leftarrow \relay(x_2), 
            \ldots x'_n \leftarrow \relay(x_n)
            \big\}
        }
        {
            \cI, \cO, \cE \cup \big\{ 
            x \leftarrow \concat(\overline{x}_i),
            \overline{x}_j \leftarrow \splitc(x)
            \big\} \Longrightarrow \cI, \cO, \cE
            \cup E
        }
        {Concat-Split}
    \end{array}
$
\end{center}
\caption{Auxiliary Transformation}
\label{fig:transform_auxiliary}
\end{figure*}

\begin{figure*}
    \begin{center}
    $
    \infral{
        x_r, x_m, x^m_1, \ldots x^m_n, x^r_1, \ldots x^r_n, x^t_1, \ldots x^t_n,
            x^p_1, \ldots x^p_n\leftarrow \mathsf{Unused}(\cI, \cO, \cE) \quad 
            \text{dp}(f, f_m, f_r)\\
        E = \big\{ 
            \tup{x_1^c, \ldots, x_n^c, x_{n+1}^c } \leftarrow \mathsf{tee}(x^c) : \forall x^c \in \overline{x}
            \big\} \cup \\
            \big\{
            x_i^m \leftarrow f_m(x_i, \overline{x}_i) : \forall i \in \{ 1 \ldots n\}
            \big\} \cup
            \big\{
            x_r \leftarrow f_r(x_1^m, \ldots, x_n^m, \overline{x}_{n+1})
            \big\}
    }
            {\cI, \cO, \cE \cup 
                \big\{x \leftarrow \concat(x_1, \ldots, x_n), x_i \leftarrow f(x, \overline{x}) 
                \big\} \Longrightarrow \cI, \cO, \cE \cup E}
    {Parallel}
    $
    \end{center}
    \caption{Parallelization Transformation}
    \label{fig:transformation_parallel}
\end{figure*}

Using these helper nodes our compiler performs a set of auxiliary transformations that are
depicted in Figure~\ref{fig:transform_auxiliary}.
$\relay$ acts an identity function, therefore any edge can be transformed to
include a relay between them.
Spliting in multiple stages to get $n$ edges is the same as splitting in one
step into $n$ edges.
Similarly, combining $n$ edges in multiple stages is the same as combining $n$
edges in a single stage.
If we split an edge into $n$ edges, and then combine the $n$ edges back, this
behaves as an identity. A $\concat$ can be pushed following a $\mathsf{tee}$ by
creating $n$ copies of the $\mathsf{tee}$ function.
If a $\concat$ has single incoming edge, we can convert it into a relay. If a
$\splitc$ has a single outgoing edge, we can convert it into a relay.

The first seven transformations can be performed both ways.
The last transformations is one way.
A $\splitc$ after a $\concat$ can be converted into relays, if the input arity
of $\concat$ is the same as output arity of $\splitc$.
The reverse transformation in this case is not allowed as, using (Relay) rule, 
we can $\concat$ and $\splitc$ any two or more streams in the dataflow graph.
This will allow us to 
pass the output of any function in our graph to any other function as an input. 
This will break the semantics of our Dataflow graph.

\subsection{Data Parallelism and Transformations}
\label{ir:transformations}

The dataflow model exposes task parallelism as each different node can execute independently---only communicating with the other nodes through their communication channels.
In addition to that, it is possible to achieve data parallelism by executing some nodes in parallel by partitioning part of their input.

\heading{Sequential Consumption Nodes}
We are interested in nodes that produce a single output and consume their inputs in sequence (one after the other when they are depleted), after having consumed the rest of their inputs as an initialization and configuration phase. Note that there are several examples of shell commands that correspond to such nodes, e.g. \ttt{grep}, \ttt{sort}, \ttt{grep -f}, and \ttt{sha1sum}. Let such a node $x' = f(x_1, \ldots, x_{n+m})$, where w.l.o.g. $x_1, x_2, \ldots, x_n$ represent the configuration inputs and $x_{n+1}, \ldots, x_{n+m}$ represent the sequential consumption inputs. The consumption order of such a command is shown below:
\[
\mathsf{choice}_f(\overline{v}) = 
\begin{cases}
\{ i : i \leq n \wedge \neg closed(v_i) \} \text{ if} \neg \forall i \leq n, closed(v_i) \\
\{ i : \forall j < i, closed(v_j) \} \text{ otherwise}
\end{cases}
\]
If we know that a command $f$ satisfies the above property we can safely transform it to a $x^i = \concat(x_{n+1}, \ldots, x_{n+m})$ followed by a command $x' = f'(x^i, x_1, \ldots, x_n)$, without altering the semantics of the graph.

\heading{Data Parallel Nodes}
We now shift our focus to a subset of the sequential consumption nodes, namely those that can be executed in a data parallel fashion by splitting their inputs. These are nodes that can be broken down in a parallel \emph{map} $f_m$ and an associative \emph{aggregate} $f_r$. Formally, these nodes have to satisfy the following equation:
\begin{align*}
\forall n, &\llbracket f(v_1^i \cdots v_n^i, \overline{v}) \rrbracket = \llbracket f_r(\overline{v}_1, \ldots, \overline{v}_n, \overline{v}) \rrbracket \\
&\text{where } \forall i \in \{1 \ldots n \} \overline{v}_i = \llbracket f_m(v_i^i, \overline{v}) \rrbracket
\end{align*}
We denote data parallel nodes as $\text{dp}(f, f_m, f_r)$
Example of such a node that satisfies this property is the \ttt{sort} command, where $f_m = \text{\texttt{sort}}$ and $f_r = \text{\texttt{sort -m}}$.

In addition to the above equation, a \emph{map} function $f_m$ should not output anything when its input closes.
\[
    \llbracket f_m(v, \overline{v}) \rrbracket = \llbracket f_m(v \cdot \bot, \overline{v}) \rrbracket
\]
Note that $f_m$ could have multiple outputs and be different than the original function $f$. As has been noted in prior research~\cite{dq:17} this is important as some functions require auxiliary information in the map phase in order to be parallelized.
An important observation is that a subset of all data parallel nodes are completely stateless, meaning that $f_m = f$ and $f_r = \concat$, and therefore are embarrasingly parallel.

We can now define a transformation on any data parallel node $f$, that replaces it with a map followed by an aggregate. This transformation is formally shown in Figure~\ref{fig:transformation_parallel}. Essentially, all the sequential consumption inputs (that are concatenated using $\concat$) are given to different $f_m$ nodes the outputs of which are then aggregated using $f_r$ while preserving the input order. Note that the configuration inputs have to be duplicated using \ttt{tee} to ensure that all parallel $f_m$s and $f_r$s will be able to read them in case they are pipes and not files on disk.

Using the auxiliary transformations---by adding a \ttt{split} followed by \ttt{cat} before a data parallel node, we can always parallelize them using the parallelization transformation.

\heading{Correctness of Transformations}
We now proceed to prove a series of statements regarding the
semantics-preservation properties of dataflow programs.

\noindent{\it Program Equivalence:}
Let $p = \tup{\cI, \cO, \cE}$ and $p' = \tup{\cI', \cO', \cE'}$ be two dataflow programs,
where $\cI = \tup{x^i_1, \ldots x^i_n}$, $\cI' = \tup{y^i_1, \ldots y^i_n}$, 
$\cO = \tup{x^o_1, \ldots x^o_m}$, and $\cO' = \tup{y^o_1, \ldots y^o_m}$. These
programs are equivalent if and only if, assuming initial
value of input values are equal (for all $k \in [1, n].~x^i_k$ is equal to the
value of $y^i_k$), for 
the value of the output variables is the
same, when both of these DFGs terminate.
Formally, for all values $v_1, \ldots v_n$, 
if 
\[
    \forall~j \in [1, n].\Gamma_i(x^i_j) = \Gamma'_i(y^i_j) = v_j
\]
\[
    \forall~j \in [1, m].\Gamma_o(x^o_j) = \Gamma'_o(y^o_j)
\]
where $\Gamma_i$, $\Gamma'_i$ are the initial mappings
for $p$ and $p'$ respectively, and $\Gamma_o$, $\Gamma'_o$ are the mappings when $p$ and $p'$ have completed their execution.

\eat{
\begin{theorem}
    \label{thm:equiv_sub}
    Let $p = \tup{\cI, \cO, \cE \cup E}$ and
$p' = \tup{\cI, \cO, \cE \cup E'}$
    be two dataflow programs.
    Let $\cS_i$ be the set of input variables in node set $E$ (variables read in
    $E$ but not assigned inside $E$). Let $\cS_o$ be the set of output variables
    in the node set $E$ (variables assigned in $E$ but not read inside $E$).
    And $\cS_i$, $\cS_o$ be the input variables and output variables of $E'$ and
    $E$ as well. 
    If $\tup{\cS_i, \cS_o, E}$ is equivalent to $\tup{\cS_i, \cS_o, E'}$, then  
    program 
    $\tup{\cI, \cO, \cE \cup E'}$ is equivalent to $\tup{\cI, \cO, \cE \cup E}$.
\end{theorem}
\begin{proof}
  In appendix~\sx{thm:equiv_sub-apx}.
\end{proof}

\begin{theorem}
    Transformations presented in Figure~\ref{fig:transform_auxiliary} and Figure~\ref{fig:transformation_parallel} preserve program equivalence.
\end{theorem}
\begin{proof}
  In appendix~\sx{thm:correct-apx}.
\end{proof}
}

\begin{theorem}
    \label{thm:equiv_sub-apx}
    Let $p = \tup{\cI, \cO, \cE \cup E}$ and
$p' = \tup{\cI, \cO, \cE \cup E'}$
    be two dataflow programs.
    Let $\cS_i$ be the set of input variables in node set $E$ (variables read in
    $E$ but not assigned inside $E$). Let $\cS_o$ be the set of output variables
    in the node set $E$ (variables assigned in $E$ but not read inside $E$).
    Let $\cS'_i$, $\cS'_o$ be the input variables and output variables of $E'$.
    We assume $\cS_i = \cS'_i$ and $\cS_o = \cS'_o$.
    If $\tup{\cS_i, \cS_o, E}$ is equivalent to $\tup{\cS_i, \cS_o, E'}$, then  
    program 
    $\tup{\cI, \cO, \cE \cup E'}$ is equivalent to $\tup{\cI, \cO, \cE \cup E}$.
\end{theorem}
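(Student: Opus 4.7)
The plan is to exploit the acyclicity of the combined dataflow graph, together with Theorems~\ref{thm:termination-apx} and~\ref{thm:constraint-apx}, to show that the terminal values of all variables in $p$ and $p'$ agree whenever the input mappings on $\cI$ agree. I would first establish a determinism lemma: for a fixed initial mapping on $\cI$, the terminal mapping on every variable is uniquely determined. This follows by topological induction on $\cE \cup E$ (respectively $\cE \cup E'$): by Theorem~\ref{thm:constraint-apx} each node's terminal outputs satisfy the equation $\tup{\Gamma(x'_1), \ldots, \Gamma(x'_p)} = \exec{f(\Gamma(x_1), \ldots, \Gamma(x_n))}$, by Theorem~\ref{thm:progress-apx} every variable is closed at termination, and acyclicity lets these equations be solved in a single topological sweep. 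Since the wrapped function $\exec{\cdot}$ is single-valued, the terminal mapping is pinned down by $\Gamma_i$.

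With determinism in hand, I would propagate equality between the two programs along a topological order of $\cE \cup E$ in which all nodes of $E$ appear consecutively, as a single ``block''. Nodes of $\cE$ scheduled before the block draw their inputs only from $\cI$ or from already-processed $\cE$-nodes, so by the inductive hypothesis their terminal values are identical in both programs; in particular, the terminal values on $\cS_i$ coincide in $p$ and $p'$. At the block I would invoke the theorem's hypothesis: the restriction of the full-program terminal mapping to $\cS_i \cup \cS_o$ in each program coincides with the terminal mapping of the standalone subprogram $\tup{\cS_i, \cS_o, E}$ (respectively $\tup{\cS_i, \cS_o, E'}$) initialized with those $\cS_i$ values as its inputs, so the assumed subprogram equivalence forces the terminal $\cS_o$ values to agree. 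Nodes of $\cE$ scheduled after the block depend only on $\cS_o$ and previously-determined values, so equality propagates through the remainder of $\cE$ and in particular to $\cO$, which is what program equivalence of $p$ and $p'$ demands.

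The main obstacle will be justifying that the restriction of the full-program terminal mapping to $\cS_i \cup \cS_o$ really does coincide with the terminal mapping of the isolated subprogram started from those $\cS_i$ values. The full-program execution interleaves steps inside and outside the block, whereas the equivalence hypothesis is stated in a batch fashion that treats $\cS_i$ as whole files supplied at the start. I would bridge this gap by observing that at termination every $\cS_i$ variable carries a closed stream (Theorem~\ref{thm:progress-apx}), so feeding that closed stream into the standalone subprogram yields, by determinism applied to the subprogram, terminal $\cS_o$ values that equal the ones sitting in the full-program terminal mapping, since both satisfy the same fixed-point equations over the acyclic graph $E$. A block-consecutive topological order on $\cE \cup E$ exists precisely because of acyclicity of the full graph, which is what makes the three-phase induction---upstream $\cE$, then the $E$ block via the hypothesis, then downstream $\cE$---well-founded.
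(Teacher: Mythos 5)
Your proposal is correct and follows essentially the same route as the paper's own proof: acyclicity plus the shared subgraph $\cE$ force the terminal values on $\cS_i$ to agree, the subprogram-equivalence hypothesis then forces agreement on $\cS_o$, and agreement propagates through the downstream portion of $\cE$ to $\cO$. You supply considerably more detail than the paper's three-sentence argument---notably the determinism lemma and the bridge between the interleaved full-program execution and the batch-style statement of the equivalence hypothesis---but the decomposition and the key steps are the same.
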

\begin{proof}
Given any initial mapping $\Gamma_i$, let $\Gamma_o$, $\Gamma'_o$ be the
    mappings when $p$ and $p'$ complete their execution. 
    For all $x \in \cS_i$, $\Gamma_o(x) = \Gamma'_o(x)$ as there are no cycles
    in the dataflow graph, and the subgraph which computes $\cS_i$ is same in
    both $p$ and $p'$. 

    \eat{
Since $\tup{\cS_i, \cS_o, E}$ is equivalent to $\tup{\cS_i, \cS_o, E'}$, 
    and since for all $x_k \in \cS_i$, $\Gamma_o(x_k) = \Gamma'_o(x_k)$,
    for all $x_j \in \cS_o$, $\Gamma_o(x_j) = \Gamma'_o(x_j)$.
}

    Since $\tup{\cS_i, \cS_o, E}$ is equivalent to $\tup{\cS_i, \cS_o, E'}$,
    and for all $x \in \cS_i$, $\Gamma_o(x) = \Gamma'_o(x)$,
    for all $x \in \cS_o$, $\Gamma_o(x) = \Gamma'_o(x)$.

    Only variables in set $\cS_o$ are the variables assigned in $E$ and $E'$, 
    that are used in computing the value of the output variables $\cO$. 
    Since the value of the variables is same in both these programs, given the
    same input mapping $\Gamma_i$, for all output variables $x \in \cO$,
    $\Gamma_o(x) = \Gamma'_o(x)$.

    Therefore, both these programs are equivalent.
\end{proof}

\begin{theorem}
    \label{thm:correct-apx}
    Transformations presented in Figure~\ref{fig:transform_auxiliary} and Figure~\ref{fig:transformation_parallel} preserve program equivalence.
\end{theorem}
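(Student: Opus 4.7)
The plan is to argue by case analysis on the transformation rules, and in each case apply Theorem~\ref{thm:equiv_sub-apx} to reduce the global equivalence to a local one: for each rule we identify the LHS subgraph $E$ and the RHS subgraph $E'$ together with their shared boundary variables $\cS_i, \cS_o$, and then show that the two standalone dataflow programs $\tup{\cS_i, \cS_o, E}$ and $\tup{\cS_i, \cS_o, E'}$ compute the same output mapping on every input mapping. Theorem~\ref{thm:progress-apx} guarantees that both local programs terminate, and Theorem~\ref{thm:constraint-apx} pins down the final value of every internal variable as $\exec{\cdot}$ applied to the (closed) input streams, so the comparison reduces to an equality between compositions of the semantic functions of the nodes.

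For the rules of Figure~\ref{fig:transform_auxiliary} this is straightforward unfolding. The (Relay) rule holds because $\relay$ is the identity. The (Split-Split) and (Concat-Concat) rules are a direct consequence of associativity of word concatenation, since in closed-stream form both sides of the equation deliver $v_1 \cdot v_2 \cdots v_m \cdot \bot$ to the appropriate boundary edge. The (Split-Concat) and (Concat-Split) rules amount to $v_1 \cdots v_m = \concat(\splitc(v_1 \cdots v_m))$, which again follows from the definitions given in Section~\ref{auxiliary}. The (One-Concat) and (One-Split) rules are immediate from the arity-one instances of those definitions. The (Tee-Concat) rule holds because duplicating and then concatenating a sequence of streams is equal to concatenating and then duplicating, which can be verified by unfolding the $\mathsf{tee}$ and $\concat$ semantics on closed inputs.

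For the (Parallel) rule of Figure~\ref{fig:transformation_parallel}, equivalence rests on the defining equation of $\text{dp}(f, f_m, f_r)$. On the LHS the program computes $\exec{f(\concat(v_1, \ldots, v_n), \overline{v})}$, while on the RHS it computes
\[
\exec{f_r\bigl(\exec{f_m(v_1,\overline{v})}, \ldots, \exec{f_m(v_n,\overline{v})}, \overline{v}\bigr)}
\]
and these are equal by the data-parallelism specification. The $\mathsf{tee}$ nodes introduced for each configuration input $\overline{v}$ faithfully replicate their input by the $\mathsf{tee}$ semantics, so the $f_m$ and $f_r$ invocations indeed see the intended configuration streams. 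The auxiliary requirement $\exec{f_m(v, \overline{v})} = \exec{f_m(v \cdot \bot, \overline{v})}$ ensures that aggregating the $f_m$ outputs via $f_r$ yields the same result whether the partial inputs are considered closed individually or only after all have been consumed, matching the sequential-consumption behaviour of $f$ encoded by $\mathsf{choice}_f$.

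The main obstacle I expect is the (Parallel) case, because it mixes three kinds of reasoning that are trivial in isolation but subtle together: order-aware input consumption (configuration inputs must be fully absorbed before sequential-consumption inputs), the order-preserving aggregation by $f_r$ (so that the output agrees with $f$ applied to the ordered concatenation), and the interaction with the helper $\mathsf{tee}$ nodes which are themselves order-aware. To keep the proof clean I would first state and prove a lemma that, for any sequential-consumption node $f$, $\exec{f(\concat(v_1,\ldots,v_n),\overline{v})} = \exec{f(v_1,\ldots,v_n,\overline{v})}$ under the appropriate $\mathsf{choice}_f$; then the equivalence of the LHS and RHS subgraphs reduces directly to the $\text{dp}$ equation. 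With this lemma established, the (Parallel) case is dispatched by a single equational chain, and Theorem~\ref{thm:equiv_sub-apx} completes the lift to full program equivalence, concluding the proof.
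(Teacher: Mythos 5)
Your overall strategy---case analysis on the rules, reduction of each rule to a local equivalence between the subgraphs $E$ and $E'$ over shared boundary variables $\cS_i,\cS_o$, and a final lift to whole-program equivalence via Theorem~\ref{thm:equiv_sub-apx}---is exactly the paper's strategy, and your treatments of (Relay), (Concat-Concat), (Split-Concat), (One-Concat), (One-Split), (Tee-Concat), and (Parallel) match the paper's much terser justifications; your proposed lemma for sequential-consumption nodes is a reasonable way to organize the (Parallel) case.

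There is, however, a genuine gap in your handling of (Concat-Split). You reduce it to the identity $v_1\cdots v_m = \concat(\splitc(v_1\cdots v_m))$, but that identity only justifies (Split-Concat). For (Concat-Split) the boundary outputs are the $n$ separate edges $x'_1,\ldots,x'_n$, so the local equivalence you would need is $\exec{\splitc(v_1\cdots v_n)} = \tup{v_1,\ldots,v_n}$, i.e.\ that $\splitc$ recovers exactly the original partition. This is false in general: the definition of $\splitc$ in Section~\ref{auxiliary} deliberately leaves the cut points abstract, so the concat-then-split side may deliver a different partition $\tup{w_1,\ldots,w_n}$ with $w_1\cdots w_n = v_1\cdots v_n$ but $w_i \neq v_i$. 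Consequently the two local programs $\tup{\cS_i,\cS_o,E}$ and $\tup{\cS_i,\cS_o,E'}$ are \emph{not} equivalent as standalone dataflow programs, and Theorem~\ref{thm:equiv_sub-apx} does not apply to this rule as you have set it up (this is also why (Concat-Split) is the one rule stated only as a one-way rewrite). The paper's proof closes this case by invoking an additional, global property: the surrounding program's output is independent of how $\splitc$ chooses to break its input stream, so replacing one admissible partition by another does not change the final output mapping. Your argument needs either this extra hypothesis made explicit, or a strengthened notion of local equivalence (equivalence up to re-partitioning of split-produced edges, preserved by downstream consumers). A similar, milder caveat applies to (Split-Split), where the two sides may likewise deliver different partitions across the $m$ boundary edges.
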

\begin{proof}
    The (Relay) transformation preserves program
    equivalence as
    the program terminates, the value of $x_i$ is equal to the value of
    $x_j$. 

    The remaining transformations, transforming an input program 
$\tup{\cI, \cO, \cE \cup E}$ to an output program
    $\tup{\cI', \cO',\cE \cup E'}$.
    For all transformations $\cS_i = \cS'_i$ and $\cS_o = \cS'_o$ (where $\cS_i,
    \cS'_i, \cS_o, \cS'_o$ are defined above).
    First seven transformations, 
    equivalence of programs $\tup{\cS_i,
    \cS_o, E}$ and $\tup{\cS_i, \cS_o, E'}$
 follow from the execution semantics for
    $\concat$, $\relay$, $\splitc$, $\mathsf{tee}$, 
the properties
    of $f_m$ and $f_r$ for data parallel commands $f$.

   The (Concat-Split) transformation relies on the additional property that the
    program produces the same output independent of how the split breaks the
    input stream. Choice of a particular way of breaking the stream does not
    change the value of the program's output variables when it terminates.
    
    Since, $\tup{\cS_i,
    \cS_o, E}$ is equivalent to $\tup{\cS_i, \cS_o, E'}$, 
    these transformations preserve equivalence (Theorem~\ref{thm:equiv_sub-apx}).
\end{proof}

\section{Evaluation}
\label{eval}

Our evaluation consists of two parts. 
The first part is a case study of applying GNU Parallel to two scripts, demonstrating the difficulty of manually reasoning about parallel shell pipelines and the challenges that one has to address in order to achieve a parallel implementation.
The second part demonstrates the performance benefits of our transformations on 47 unmodified shell scripts.
Before discussing our evaluation, we offer a brief outline of the compiler implementation.

\heading{Implementation}
We reimplement the compilation and optimization phases of PaSh~\cite{pash} according to our model and associated transformations.
\rev{
The new implementation is about 1500 lines of Python code and uses the order-aware dataflow model as the centerpiece intermediate representation.
It is also more modular and facilitates the development of additional transformations, closely mirroring the back-and-forth shell-to-ODFM translations described in Section~\ref{translations} and the parallezing transformations described in Section~\ref{transformations}.
While we expect that most users would use PaSh by writing shell scripts, completely ignoring the ODFM, it is also possible to manually describe programs in the intermediate representation, enabling other frontend and backend frameworks to interface with it.
}

By reimplementing PaSh's optimization phase to mirror our transformations we also discovered and solved a bug in PaSh.
The old implementation did not \ttt{tee} the configuration inputs of a parallelized command, but rather allowed all parallel copies to read from the same input.
While this is correct if the configuration input is a file on disk, the semantics indicated that in the general case it leads to incorrect results---for example, in cases where this input is a stream---because all parallel commands consume items from a single stream, only reading a subset of them.

\subsection{Case Study: GNU Parallel}
\label{case-study}

We describe an attempt to achieve data parallelism in two scripts using GNU \ttt{parallel}~\cite{Tange2011a}, a tool for running shell commands in parallel.
We chose GNU \ttt{parallel} because it compares favorably to other alternatives in the literature~\cite{parallel-alts}, but note that GNU \ttt{parallel} sits somewhere between an automated compiler, like PaSh and POSH, and a fully manual approach---illustrating only some of the issues that one might face while manually trying to parallelize their shell scripts.

\heading{Spell}
We first apply \ttt{parallel} on \emph{Spell}'s first pipeline~\sx{intro}:
\medskip
\begin{minted}[fontsize=\footnotesize]{bash}
TEMP_C1="/tmp/{/}.out1"
TEMP1=$(seq -w 0 $(($JOBS - 1)) | sed 's+^+/tmp/in+' | sed 's/$/.out1/' | tr '\n' ' ')
TEMP1=$(echo $TEMP1)
mkfifo $TEMP1
parallel "cat {} | col -bx | tr -cs A-Za-z '\n' | tr A-Z a-z | \
    tr -d '[:punct:]' | sort > $TEMP_C1" ::: $IN &
sort -m $TEMP1 | parallel -k --jobs ${JOBS} --pipe --block "$BLOCK_SIZE" "uniq" | 
    uniq | parallel -k --jobs ${JOBS} --pipe --block "$BLOCK_SIZE" "grep -vx -f $dict -"
rm $TEMP1
\end{minted}
\medskip
\noindent
It took us a few iterations to get the parallel version right, leading to a few observations.
First, despite its automation benefits, \ttt{parallel} still requires manual placement of the intermediate FIFO pipes and $agg$ functions.
Additionally, achieving ideal performance requires some tweaking:
  setting \ttt{--block} to \ttt{10K}, \ttt{250K}, and \ttt{250M} yields widely different execution times---27, 4, and 3 minutes respectively.

Most importantly, omitting the \ttt{-k}  flag in the last two fragments breaks correctness due to re-ordering related to scheduling non-determinism.
These fragments are fortunate cases in which the \ttt{-k} flag has the desired effect, because their output order follows the same order as the arguments of the commands they parallelize.
Other commands face problems, in that the correct output order is not the argument order nor an arbitrary interleaving.

\heading{Set-difference}
We apply \ttt{parallel} to Set-diff, a script that compares two streams using \ttt{sort} and \ttt{comm}:
\medskip
\begin{minted}[fontsize=\footnotesize]{bash}
mkfifo s1 s2
TEMP_C1="/tmp/{/}.out1"
TEMP1=$(seq -w 0 $(($JOBS - 1)) | sed 's+^+/tmp/in+' | sed 's/$/.out1/' | tr '\n' ' ')
TEMP1=$(echo $TEMP1)
TEMP_C2="/tmp/{/}.out2"
TEMP2=$(seq -w 0 $(($JOBS - 1)) | sed 's+^+/tmp/in+' | sed 's/$/.out2/' | tr '\n' ' ')
TEMP2=$(echo $TEMP2)
mkfifo ${TEMP1} ${TEMP2}
parallel "cat {} | cut -d ' ' -f 1 | tr [:lower:] [:upper:] | sort > $TEMP_C1" ::: $IN &
sort -m ${TEMP1} > s1 &
parallel "cat {} | cut -d ' ' -f 1 | sort > $TEMP_C2" ::: $IN2 &
sort -m ${TEMP2} > s2 &
cat s1 | parallel -k --pipe --jobs ${JOBS} \
  --block "$BLOCK_SIZE" "grep -vx -f s2 -"
rm ${TEMP1} ${TEMP2}
rm s1 s2  
\end{minted}
\medskip
\noindent
In addition to the issues highlighted in Spell, this parallel implementation has a subtle bug.
GNU \ttt{parallel} spawns several instances of \ttt{grep -vx -f s2 -} that all read FIFO \ttt{s2}.
When the first parallel instance exits, the kernel sends a \ttt{SIGPIPE} signal to the second \ttt{sort -m}. 
This forces \ttt{sort} to exit, in turn leaving the rest of the parallel \ttt{grep -vx -f} instances blocked waiting for new input.

The most straightforward way we have found to address this bug is to remove (1) the ``\ttt{&}'' operator after the second \ttt{sort -m}, and (2) \ttt{s2} from \ttt{mkfifo}.
This modification sacrifices pipeline parallelism, as the first stage of the pipeline completes before executing \ttt{grep -vx -f}.
The \ttt{parallel} pipeline modified for correctness completes in 4m54s.
Our compiler does not sacrifice pipeline parallelism by using \ttt{tee} to replicate \ttt{s2} for all parallel instances of \ttt{grep -vx -f}~\sx{bg}, completing in 4m7s.

\subsection{Performance Results}

\heading{Methodology} 
We use three sets of benchmark programs from various sources, including GitHub, StackOverflow, and the \unix literature~\cite{bentley1986literate, bentley1985spelling, taylor2004wicked, bigrams, unix50sol, mcilroy1978unix}. 

\begin{itemize}
        \item {\bf Expert Pipelines:}
        The first set contains 9 pipelines:
            NFA-regex, Sort, Top-N, WF, Spell, Difference, Bi-grams, Set-Difference, and Shortest-Scripts.
        Pipelines in this set contain 2--7 stages (mean: 5.2), ranging from a scalable CPU-intensive \ttt{grep} stage in NFA-regex to a non-parallelizable \ttt{diff} stage in Difference.
        These scripts are written by \unix experts:
          a few pipelines are from \unix legends~\cite{bentley1986literate, bentley1985spelling, mcilroy1978unix}, one from a book on \unix scripting~\cite{taylor2004wicked}, and a few are from top Stackoverflow answers~\cite{bigrams}.

        \item {\bf Unix50 Pipelines:} 
        The second set contains 34 pipelines solving the \unix50 game~\cite{unix50}.
        This set is from a recent set of challenges celebrating of \unix's 50-year legacy, solvable by \unix pipelines.
        The problems were designed to highlight \unix's modular philosophy~\cite{mcilroy1978unix}.
        We found unofficial solutions to all-but-three problems on GitHub~\cite{unix50sol}, expressed as pipelines with 2--12 stages (mean: 5.58).
        They make extensive use of standard commands under a variety of flags, and appear to be written by non-experts---contrary to the previous set, they often use sub-optimal or non-\unix-y constructs.
        We execute each pipeline as-is, without any modification.

        \item {\bf COVID-19 Mass-Transit Analysis Pipelines:}
        The third set contains 4 pipelines that were used to analyze real telemetry data from bus schedules during the COVID-19 response in one of Europe's largest cities~\cite{oasa-article}.
        The pipelines compute several statistics on the transit system per day---such as average serving hours per day and average number of vehicles per day. %
        Pipelines range between 9 and 10 stages (mean: 9.2) and use typical \unix staples such as \ttt{sed}, \ttt{awk}, \ttt{sort}, and \ttt{uniq}.
\end{itemize}

\noindent
We use our implementation of PaSh to parallelize all of the pipelines in these benchmark sets, working with three configurations:

\begin{itemize}
  \item {\bf Baseline:}
  Our compiler simply executes the script using a standard shell (in our case \ttt{bash}) without performing any optimizations.
  This configuration is used as our baseline.
  Note that it is not completely sequential since the shell already achieves pipeline and task parallelism based on \ttt{|} and \ttt{&}. 
  \item {\bf No Cat-Split:} 
  Our compiler performs all transformations except {\sc Concat-Split}.
  This configuration achieves parallelism by splitting the input before each command and then merging it back.
  It is used as a baseline to measure the benefits achieved by the {\sc Concat-Split} transformation.
  \item {\bf Parallel:}
  Our compiler performs all transformations.
  The {\sc Concat-Split} transformation, which removes a $\concat$ with $n$ inputs followed by a $\text{split}$ with $n$ outputs, ensures that data is not merged unnecessarily between parallel stages.  
\end{itemize}

\noindent
Experiments were run on a 2.1GHz Intel Xeon E5-2683 with 512GB of memory and 64 physical  cores, Debian 4.9.144-3.1, GNU Coreutils 8.30-3, GNU Bash 5.0.3(1), OCaml 4.05.0, 
and Python 3.7.3.
All pipelines are set to (initially) read from and (finally) write to the file system.
For ``Expert Pipelines'', we use 10GB-collections of inputs from Project Gutenberg~\cite{hart1971project};
  for ``Unix50 Pipelines'', we gather their inputs from each level in the game~\cite{unix50} and multiply them up to 10GB.
For ``Bus Route Analysis Pipelines'' we use the real bus telemetry data for the year 2020 (\textasciitilde~3.4GB).
The inputs for all pipelines are split in 16 equal sized chunks, corresponding to the intended parallelism level.

\begin{figure*}[t]
  \centering
  \includegraphics[width=\textwidth]{\detokenize{./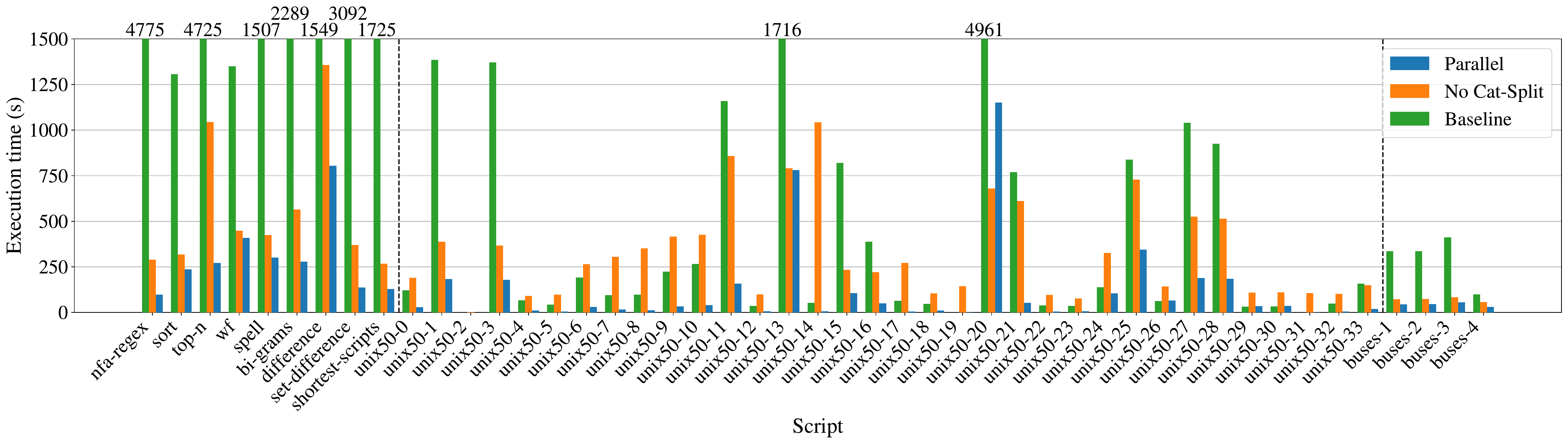}}
    \vspace{-3mm}
  \caption{
    Execution times with configurations \textbf{Baseline}, \textbf{Parallel}, and \textbf{No Cat-Split},
    with 16$\times$-parallelism.
  }
  \label{fig:oneliners-bar}
\end{figure*}

Fig.~\ref{fig:oneliners-bar} shows the execution times on all programs with $16\times$ parallelism and for all three configurations mentioned in the beginning of the evaluation.
It shows that all programs achieve significant improvements with the addition of the \textbf{Concat-Split} transformation.
The average speedup without \textbf{Concat-Split} over the \ttt{bash} baseline is 2.26$\times$.
The average speedup with the transformation is 6.16$\times$.

\pichskip{12pt}
\parpic[r][t]{
  \begin{minipage}{60mm}
  \includegraphics[width=\columnwidth]{\detokenize{./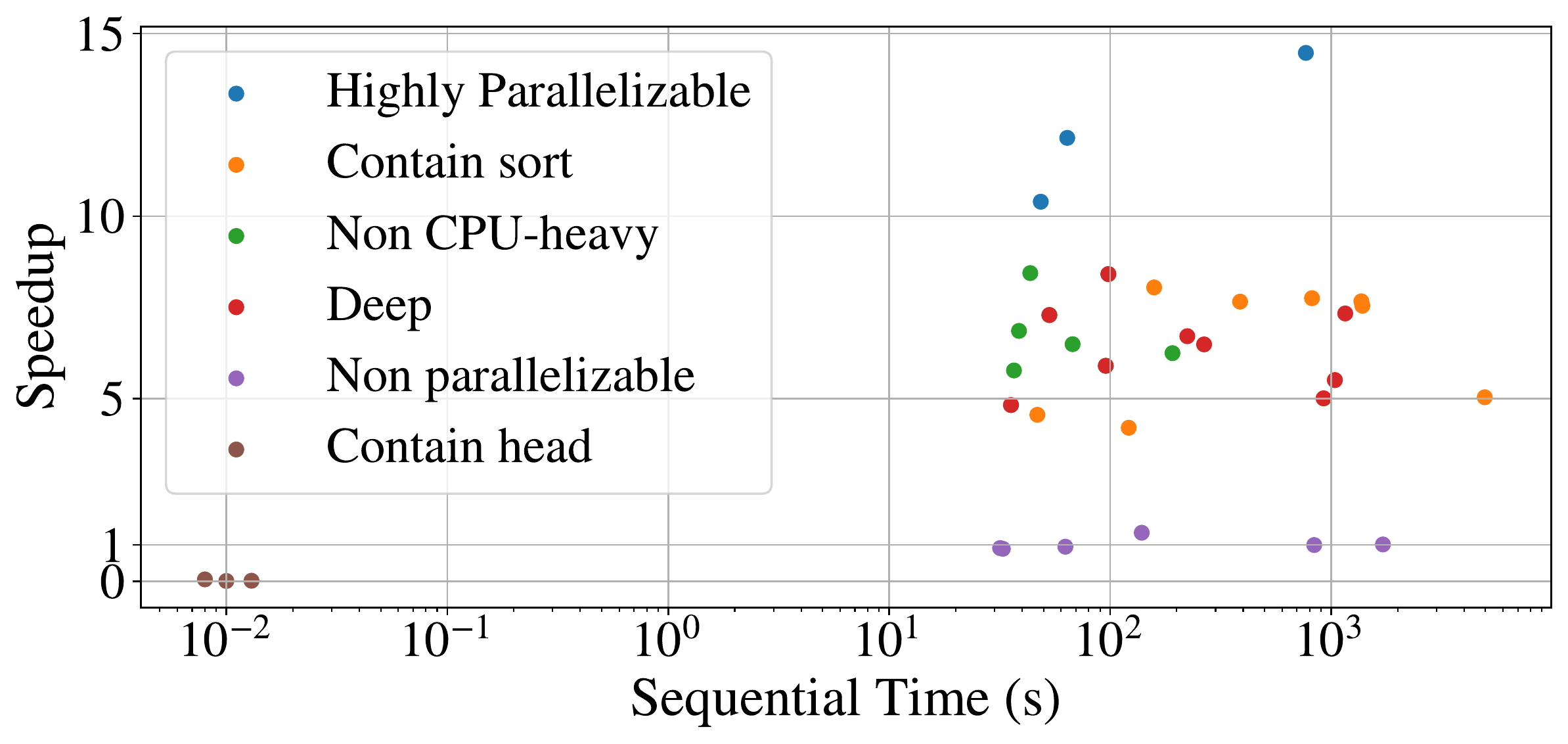}}
  \label{fig:scatter}
  \end{minipage}
}
The figure on the right explains the differences in the effect of the transformation based on the kind of commands involved in the pipelines.
It offers a correlation between sequential time and speedup, and shows that different programs that involve commands with similar characteristics (color) see similar speedups (y-axis).
Programs containing only parallelizable commands see the highest speedup (10.4--14.5$\times$).
Programs with limited speedup either 
  (1) contain \ttt{sort}, which does not scale linearly,
  (2) are not CPU-intensive, resulting in pronounced IO and constant costs, or
  (3) are deep pipelines, already exploiting significant pipeline-based parallelism.
Programs with non-parallelizable commands see no significant change in execution time (0.9--1.3$\times$).
Finally, programs containing \ttt{head} have a very small sequential execution, typically under $1s$, and thus their parallel equivalents see a slowdown due to constant costs---still remaining under $1s$.

\section{Related Work}
\label{related}

\heading{Dataflow Graph Models}
Graph models of computation where nodes represent units of computation and edges represent FIFO communication channels have been studied extensively~\cite{dennis1974first, karp1966properties, lee1987static, lee1987synchronous, kahn1974semantics, kahn1976coroutines}.
ODFM sits somewhere between Kahn Process Networks~\cite{kahn1974semantics, kahn1976coroutines} (KPN),
  the model of computation adopted by \unix pipes,
and Synchronous Dataflow~\cite{lee1987static, lee1987synchronous} (SDF).
A key difference between ODFM and SDF is that ODFM does not assume fixed item rates---a property used by SDF for efficient scheduling determined at compile-time.
Two differences between ODFM from KPNs is that (i) ODFM does not allow cycles, and (ii) ODFM exposes information about the input consumption order of each node.
This order provides enough information at compile time to perform parallelizing transformations while also enabling translation of the dataflow back to a \unix shell script.

Systems for batch~\cite{ mapreduce:08, murray2013naiad, spark:12}, 
  stream~\cite{ gordon2006exploiting, thies2002streamit, mamouras2017streamqre},
  and signal processing~\cite{lee1987static, bourke2013zelus} provide dataflow-based abstractions.
These abstractions are different from ODFM which operates on the \unix shell, an existing language with its own peculiarities that have guided the design of the model.

One technique for retrofitting order over unordered streaming primitives such as sharding and shuffling is to extend the types of elements using tagging~\cite{48862, watson1979prototype, arvind1984tagged}.
This technique would not work in the \unix shell, because (1) commands are black boxes operating on stream elements in unconstrained ways (but in known order), and (2) because data streams exchanged between commands contain flat strings, without support for additional metadata extensions, and thus no obvious way to augment elements with tags.
ODFM instead captures ordering on the edges of the dataflow graph, and leverages the consumption order of nodes (the choice function) in the graph to orchestrate execution appropriately.

Synchronous languages~\cite{signal, lustre, esterel, argos} model stream graphs as circuits where nodes are state machines and edges are wires that carry a single value.
Lustre~\cite{lustre} is based on a dataflow model that is similar to ours, but its focus is different as it is not intended for exploiting data-parallelism.

\heading{Semantics and Transformations}
Prior work proposes semantics for streaming extensions to relational query languages based on dataflow~\cite{li2005semantics,arasu2006cql}.
In contrast to our work, it focuses on transformations of time-varying relations.

More recently, there has been significant work on the correct
parallelization of distributed streaming applications by proposing
sound optimizations and compilation
techniques~\cite{hirzel2014catalog,schneider2013safe}, type systems~\cite{MSAIT2019}, and differential testing~\cite{kallas2020diffstream}. These efforts aim at producing a
parallel implementation of a dataflow streaming computation using techniques that do not require knowledge of the order of consumption of each node---a property that very important in our setting.

Recent work proposes a semantic framework for
stream processing that uses monoids to capture the type of data
streams~\cite{mamouras2020semantic}. That work mostly focuses on generality of expression,
showing that several already proposed programming models can be
expressed on top of it. It also touches upon
soundness proofs of optimizations using algebraic
reasoning, which is similar to our approach.

\heading{Divide and Conquer Decomposition}
Prior work has shown the possibility of decomposing programs or program fragments using divide-and-conquer techniques~\cite{dq:99, dq:17, dq:19, mrsynth:16}.
The majority of that work focuses on parallelizing special constructs---\eg loops, matrices, and arrays---rather than stream-oriented primitives.
Techniques for automated synthesis of MapReduce-style distributed programs~\cite{mrsynth:16} can be of significant aid for individual commands.
In some cases~\cite{dq:17, dq:19}, the map phase is augmented to maintain additional metadata used by the reducer phase.
These techniques complement our work, since they can be used to derive aggregators and the parallelizability properties of yet unknown shell commands, making them possible to capture in our model. 

\heading{Parallel Shell Scripting}
Tools exposing parallelism on modern \unix{}es such as \ttt{qsub}~\cite{gentzsch2001sun}, \textsc{SLURM}~\cite{yoo2003slurm}, AMFS~\cite{amfs} and \textsc{GNU} \ttt{parallel}~\cite{Tange2011a} are
predicated upon explicit and careful orchestration from their users.
Similarly, several shells~\cite{duff1990rc, mcdonald1988support, shell2, dagsh:17} add primitives for non-linear pipe topologies---some of which target parallelism.
Here too, however, users are expected to manually rewrite scripts to exploit these new primitives without jeopardizing correctness.

\rev{
Our work is inspired by PaSh~\cite{pash} and POSH~\cite{posh}, two recent systems that use command annotations to parallelize and distribute shell programs by operating on dataflow fragments of the Unix shell.
Our work is tied to PaSh~\cite{pash}, as it (i) uses its annotation framework for instantiating the correspondence of commands to dataflow nodes~\sx{shell2dfg}, and (ii) serves as its formal foundation since it reimplements all of its parallelizing transformations and proves them correct.
POSH~\cite{posh} too translates shell scripts to dataflow graphs and optimizes them achieving performance benefits, but its goal is to offload commands close to their data in a distributed environment.
Thus, POSH only performs limited parallelization transformations, and focusing more on the scheduling problem of determining where to execute each command.
It only parallelizes commands that require a concatenation combiner, \ie a subset of the transformations that we prove correct in this work, and thus replacing its intermediate representation with our ODFM would be possible.
POSH also proposes an annotation framework that captures several command characteristics.
Some of these characteristics, such as parallelizability, are also captured by PaSh.
Others are related to the scheduling problem---for example, whether a command such as \ttt{grep} produces output smaller than its input, making it a good candidate for offloading close to the input data.
}

\heading{POSIX Shell Semantics}
Our work depends on Smoosh, an effort focused on formalizing the semantics of the POSIX shell~\cite{smoosh:20}.
Smoosh focuses on POSIX semantics, whereas our work introduces a novel dataflow model in order to transform programs and prove the correctness of parallelization transformations on them.
One of the Smoosh authors has also argued for making concurrency explicit via shell constructs~\cite{smoosh:18}.
This is different from our work, since it focuses on the capabilities of the shell language as an orchestration language, and does not deal with the data parallelism of pipelines.

\heading{Parallel Userspace Environments}
By focusing on simplifying the development of distributed programs, a plethora of environments inadvertently assist in the construction of parallel software.
Such systems~\cite{ousterhout1988sprite, mullender1990amoeba, barak1998mosix}, languages~\cite{erlang:96, acute:05, mace:07}, or system-language hybrids~\cite{pike1990plan9, andromeda:15, cloudhaskell:11} hide many of the challenges of dealing with concurrency as long as developers leverage the provided abstractions---which are strongly coupled to the underlying operating or runtime system.
Even when these efforts are shell-oriented, such as  Plan9's \ttt{rc}, they are backward-incompatible with the \unix shell, and often focus primarily on hiding the existence of a network rather than on modelling data parallelism.

\rev{
\section{Discussion}
\label{discussion}

\heading{Command Annotations}
The translation of shell scripts to dataflow programs is based on knowledge about command characteristics, such as the predicate $\mathbf{pure}$ that determines whether a command does not perform any side-effect except for writing to a set of output files \sx{shell2dfg}.
In the current implementation, this information is acquired through the annotation language and annotations provided by PaSh~\cite{pash}.
An interesting avenue for future research would be to explore analyses for inferring or checking the annotations for commands. Such work could help extend the set of supported commands (which currently requires manual effort).
Furthermore, it would be interesting to explore extensions to the annotation language in order to enable additional optimizations; for example, commands that are commutative and associative could be parallelized more efficiently, by relaxing the requirement for input order and better utilizing the underlying resources.

\heading{Directly accessing the IR in the implementation}
As described earlier~\sx{eval}, our implementation currently allows manually developing programs in the ODFM intermediate representation. However, this interface is not that convenient to use as an end-user since it requires manually instantiating each node of the graph with the necessary command metadata, \eg inputs and outputs.
It would be interesting future work to design different frontends that interface with this IR. For example, a frontend compiler from the language proposed by \ttt{dgsh}~\cite{dagsh:17}; a shell that supports extended syntax for creating DAG pipelines. 
The IR could also act as an interface for different backends, for example one that implements ODFM in a distributed setting.

\heading{Parallel Script Debugging}
Debugging standard shell pipelines can be hard and it usually requires several iterations of trial and error until the user gets the script right. 
Our approach does not make the debugging experience any worse, as the system produces as output a parallel shell script, which can be inspected and modified like any standard shell script (as seen in \S\ref{informal}).
For example, a user could debug a script by removing a few stages of the parallel pipeline, or redirecting some intermediate outputs to permanent files for inspection.
This is possible because of the expressiveness of ODFM and the existence of a bidirectional transformation from dataflow programs to shell scripts, which allows the compiler to simply use a standard shell such as \ttt{bash} as its backend.

An approach that is particularly helpful, and which we have used ourselves, is to ask the compiler to add a $\mathit{relay}$ node between every two nodes of the graph and then instantiate this $\mathit{relay}$ node with an identity command that duplicates its input to its output and also a log file.
}

\medskip
\begin{minted}[fontsize=\footnotesize]{bash}
tee $LOG < $IN > $OUT 
\end{minted}
\medskip

\rev{
\noindent
This allows for stream introspection without affecting the behavior of the pipeline, facilitating debugging since the user can inspect all intermediate outputs at once.

\heading{Stream Finiteness and Extensions}
In our current model, parallelism is achieved by partitioning the finite stream, processing the partitions in parallel, and merging the results. 
Like PaSh and POSH, our model is designed to support terminating computations over large but finite data streams. 
All of the data processing scripts that we have encountered conform to this model and are terminating.
One way to extend our work to support and parallelize infinite streams---such as the ones produced by \ttt{yes} and \ttt{tail -f}---would involve repeated applications of partitioning, processing, and merging.
}

\section{Conclusion}
\label{conclusion}

We presented an order-aware dataflow model for exploiting data parallelism latent in \unix shell scripts.
The dataflow model is order-aware, accurately capturing the semantics of complex \unix pipelines:
  in our model, the order in which a node in the dataflow graph consumes inputs from different edges plays a central role in the semantics of the computation and therefore in the resulting parallelization.
The model captures the semantics of transformations that exploit data parallelism available in \unix shell computations and prove their correctness.
We additionally formalized the translations from the \unix shell to the dataflow model and from the dataflow model back to a parallel shell script.
We implemented our model and transformations as the compiler and optimization passes of PaSh, a system parallelizing shell pipelines, and used it to evaluate the speedup achieved on 47 data-processing pipelines.

While the shell has been mostly ignored by the research community for most of its 50-year lifespan, recent 
work~\cite{greenberg2021hotos-paper,greenberg2021hotos-panel,pash,posh,smoosh:20,spinellis2017extending} indicates renewed community interest in shell-related research.
We view our work 
  partly as providing the missing correctness piece of the shell-optimization work done by the systems community~\cite{pash,posh}, 
  and partly as a stepping stone for further studies on the dataflow fragment of the shell, \eg the development of more elaborate transformations and optimizations.

\begin{acks}
We thank 
  Konstantinos Mamouras for preliminary discussions that helped spark an interest for this work, 
  Dimitris Karnikis for help with the artifact,
  Diomidis Spinellis for benchmarks and discussions,
  Michael Greenberg and Jiasi Shen for comments on the presentation of our work,
  the anonymous ICFP reviewers and our shepherd Rishiyur Nikhil for extensive feedback, and 
  the ICFP artifact reviewers for their comments that significantly improved the paper artifact.
This research was funded in part by DARPA contracts {\sc HR00112020013} and {\sc HR001120C0191}, and NSF award {\sc CCF 1763514}.
Any opinions, findings, conclusions, or recommendations expressed in this material are those of the authors and do not necessarily reflect those of DARPA or other agencies.
\end{acks}

\bibliography{bib}

\end{document}